\newtheorem{theorem}{Theorem}
\newtheorem{lemma}{Lemma}[section]
\newtheorem{proposition}[lemma]{Proposition}
\newtheorem{remark}[lemma]{Remark}
\numberwithin{equation}{section}
\newcommand{\CO}{\mathbb C}
\newcommand{\RE}{\mathbb R}
\newcommand{\NA}{\mathbb N}
\newcommand{\ve}{\varepsilon}
\newcommand{\de}{\delta}
\newcommand{\ga}{\gamma}
\newcommand{\la}{\lambda}
\newcommand{\FF}{\mathfrak{F}}
\newcommand{\GG}{\mathbf{G}}
\newcommand{\TT}{\mathbf{T}}
\newcommand{\VV}{\mathbf{V}}
\newcommand{\RR}{\mathbf{R}}
\newcommand{\HH}{\mathbf{H}}
\renewcommand{\tt}{\mathbf{t}}
\newcommand{\rr}{\mathbf{r}}
\newcommand{\vv}{\mathbf{v}}
\newcommand{\hh}{\mathbf{h}}
\newcommand{\ID}{\mathbf{1}}
\newcommand{\BB}{\mathcal{B}}
\renewcommand{\t}{t}
\newcommand{\V}{V}
\renewcommand{\gg}[1]{\tilde g^{(#1)}}
\newcommand{\trho}[1]{\tilde \rho^{(#1)}}
\newcommand{\ff}[1]{\tilde f^{[#1]}}
\newcommand{\trhont}[1]{\rho^{(#1)}}
\newcommand{\Bnorm}{\mathscr{B}}
\newcommand{\bb}{b}
\renewcommand{\Im}{\operatorname{Im}\,}
\newcommand{\lf}{\left}
\newcommand{\ri}{\right}
\newcommand{\f}{\frac}
\renewcommand{\leqslant}{\leq}
\title[The 3-body problem in dimension one: From short-range to contact interactions]{The three-body problem in dimension one: From short-range to contact interactions}
\author[G. Basti]{Giulia Basti}
\address[G. Basti]{Institut f\"{u}r Mathematik, Universit\"{a}t Z\"{u}rich, Winterthurerstrasse 190, CH-8057 Z\"{u}rich, Switzerland}
\email{giulia.basti@math.uzh.ch} 
\author[C. Cacciapuoti]{Claudio Cacciapuoti}
\address[C. Cacciapuoti]{DiSAT, Sezione di Matematica, Universit\`a dell'Insubria, via Valleggio 11, 22100 Como, Italy}
\email{claudio.cacciapuoti@uninsubria.it	}
\author[D. Finco]{Domenico Finco}
\address[D. Finco]{Facolt\`a di Ingegneria, Universit\`a Telematica
Internazionale Uninettuno,  Corso Vittorio Emanuele II 39, 00186 Roma, Italy}
\email{d.finco@uninettunouniversity.net}
\author[A. Teta]{Alessandro Teta}
\address[A. Teta]{Dipartimento di Matematica G. Castelnuovo, Sapienza Universit\`a di Roma,  Piazzale  Aldo Moro, 5, 00185 Roma, Italy}
\email{teta@mat.uniroma1.it}
\date{}
\thanks{
The authors acknowledge the support of the GNFM Gruppo Nazionale per la Fisica Matematica - INdAM. G.B., C.C., and D.F. acknowledge  the support of the project ``Progetto Giovani GNFM 2016''. %C.C. also acknowledges the support of the 
%FIR 2013 project ``Condensed Matter in Mathematical Physics'', Ministry of University and Research of Italian Republic  (code RBFR13WAET)
}
\begin{document}
%%%%%%%%%%%%%%%%%%%%%%%%%%%%%%%%%
%ABSTRACT
%%%%%%%%%%%%%%%%%%%%%%%%%%%%%%%%%
\begin{abstract}
We consider a Hamiltonian describing three quantum particles in dimension one interacting through  two-body short-range potentials. We prove that, as a suitable scale parameter in the potential terms  goes to zero,  such Hamiltonian converges to one with  zero-range (also called delta or point) interactions. The convergence is understood in norm resolvent sense. The two-body rescaled potentials are of the form $v^{\varepsilon}_{\sigma}(x_{\sigma})= \varepsilon^{-1} v_{\sigma}(\varepsilon^{-1}x_\sigma )$, where $\sigma = 23, 12, 31$ is an index that runs over all the possible pairings of the three particles,  $x_{\sigma}$ is the relative coordinate between two particles, and $\varepsilon$ is the scale parameter. The limiting Hamiltonian is the one formally obtained by replacing the potentials $v_\sigma$ with $\alpha_\sigma \delta_\sigma$, where $\delta_\sigma$ is the Dirac delta-distribution centered on the coincidence hyperplane $x_\sigma=0$ and $\alpha_\sigma = \int_{\mathbb{R}} v_\sigma dx_\sigma$.  To prove the convergence of the resolvents we make use of Faddeev's equations.
\end{abstract}

\maketitle

\begin{footnotesize}
 \emph{Keywords: Point interactions; Three-body Hamiltonian; Schr\"odinger operators.} 
 
 \emph{MSC 2010: 
81Q10; % 	Selfadjoint operator theory in quantum theory, including spectral analysis
81Q15; % 	Quantum theory - Perturbation theories for operators and differential equations
%47B25; % 	Symmetric and selfadjoint operators (unbounded)
70F07; % 	Mechanics of particles and systems - Three-body problems
46N50. % 	Functional analysis - Applications in quantum physics
%47A99;% 	Operator theory - None of the above, but in this section
%35B25; % 	Partial differential equations - Singular perturbations
%81V70;% 	Quantum theory - Many-body theory; quantum Hall effect
%35P15;% 	Partial differential equations - Estimation of eigenvalues, upper and lower bounds
%81Q05.% 	Quantum theory - Closed and approximate solutions to the Schrödinger, Dirac, Klein-Gordon and other equations of quantum mechanics
}  
 \end{footnotesize}

\vspace{1cm}

\section{Introduction}

In a dilute quantum gas at low temperature the typical wavelength of the particles is usually much larger than the effective range of the two-body interaction. In this regime the system exhibits a universal behavior,  which means that the relevant observables do not depend on the details of the interaction but only on few low-energy parameters, like the scattering length. For the mathematical modeling of these systems it is often convenient to introduce  Hamiltonians where the two-body interaction is replaced by an idealized zero-range or $\delta$ interaction, i.e., an interaction that is nontrivial only when the coordinates $x_i$ and $x_j$ of two particles  coincide. A Hamiltonian of this type is usually constructed as a self-adjoint operator in the appropriate Hilbert space using the theory of self-adjoint extensions. Roughly speaking, one obtains an operator acting as the free Hamiltonian except at the coincidence hyperplanes $\{x_i=x_j\}$, $i<j$, where a suitable boundary condition is satisfied. Many interesting mathematical results in this direction are available, see, e.g., \cite{AGHH05} which addresses mostly the  two-body problem, and \cite{michelangeli-ottolini-rmp17} for a review on the $N$-body problem, mainly in dimension three, and references therein. Here we only remark that these results strongly depend on the dimension $d$ of the configuration space. In particular, for $d=1$ the resulting Hamiltonian is a small perturbation in the sense of the quadratic forms of the free Hamiltonian, for $d=2,3$  the situation is different  and the  Hamiltonian is characterized by  singular boundary conditions  at the coincidence hyperplanes and, finally, for $d>3$ a no-go theorem prevents the construction of a nontrivial zero-range interaction.

The construction of Hamiltonians with zero-range interactions based on the theory of self-adjoint extensions could appear rather abstract from the physical point of view. A more transparent and natural justification  is obtained if one shows that these Hamiltonians are the limit of Hamiltonians with smooth, suitably rescaled two-body potentials. 
In the two-body case, reduced to a one-body problem in the relative coordinate,  such a procedure is well established in all dimensions $d=1,2,3$, see \cite{AGHH05}, while in the case of three or more particles only few results are available (\cite{dellantonio17}). 

In this paper we approach the problem in the simpler case of three particles in dimension one. More precisely, we consider the three-body Hamiltonian 
\[
\HH^{\ve,3} : = - \frac{1}{2m_1} \Delta_1 - \frac{1}{2m_2} \Delta_2 - \frac{1}{2m_3} \Delta_3+ \VV_{12}^\ve +\VV_{23}^\ve +\VV_{31}^\ve = \HH_0^{3} + \sum_{\sigma} \VV_{\sigma}^\ve\,, 
\]
where $m_j$ is the mass of the $j$-th particle and  $\Delta_j$ denotes the one-dimensional Laplacian with respect to the coordinate $x_j$ of the $j$-th particle. We use greek letters $\sigma, \gamma, \ldots$ to denote  an index that runs over the pairs $12$, $23$, and $31$ and, for simplicity, we set $\hbar=1$. Moreover,    $\VV_{\sigma}^\ve$, for $\ve>0$, describes the two-body, rescaled interaction between the particles in the pair $\sigma$, i.e.,  $\VV_{23}^\ve$ denotes the multiplication operator by the rescaled potential $v^{\ve}_{23}(x_2-x_3)= \ve^{-1} v_{23}(\ve^{-1}(x_2-x_3))$ (and similarly for the other two pairs). 

One reasonably expects that for $\ve \to 0$  the above Hamiltonian reduces to  the Hamiltonian formally written as 
\[\HH^{3} : = - \frac{1}{2m_1} \Delta_1 - \frac{1}{2m_2} \Delta_2 - \frac{1}{2m_3} \Delta_3+\alpha_{12} \delta_{12} +\alpha_{23}\delta_{23} +\alpha_{31}\delta_{31} = \HH_0^{3} + \sum_{\sigma} \alpha_\sigma\delta_{\sigma}\,, \]
where $\delta_{23}$ denotes  the Dirac-delta distribution supported on the coincidence plane $\{ x_2=x_3 \}$  of the second and third particle  (and similarly for $\delta_{12}$ and $\delta_{31}$). Here $\delta_{\sigma}$ are   understood as distributions on $\mathcal S(\RE^3)$, and $\alpha_\sigma$ are some fixed real parameters, depending on $v_{\sigma}$, which measure the strength of the interaction. %We refer to Section \ref{s:Halde} for a rigorous definition of $\HH^{3}_{\alpha \delta}$. 

In order to study the limiting procedure $\ve \to 0$, it is convenient to work in the center of mass reference frame, so that the Hilbert space of the states of the system reduces to  $L^2(\RE^2)$. We denote by $(x_{\gamma},y_{\ell})$ a generic set  of Jacobi coordinates, 
where $\gamma$ is an index that can assume value over any of the pairs  $12$, $23$, and $31$ and   $\ell$ (more precisely, one should write  $\ell_{\gamma}$) is the companion index of $\gamma$, which means that if $\gamma = 23$ then $\ell =1$ and so on. For example, we have 
\begin{equation*}%\label{jacobi}
 x_{23} = x_2-x_3\;;\quad\quad y_1 =  \frac{m_2 x_2 + m_3x_3}{m_2 + m_3}- x_1\,.  
\end{equation*}

In the center of mass reference frame and using the Jacobi coordinates the approximating Hamiltonian has the form
\begin{equation}\label{Hred}
\HH^{\ve} := - \frac{1}{2m_{\gamma}} \Delta_{x_{\gamma}} - \frac{1}{2\mu_\ell} \Delta_{y_\ell} +\sum_{\sigma} \VV^{\ve}_{\sigma} = \HH_0 +\sum_{\sigma} \VV^{\ve}_{\sigma}\,,  
\end{equation}
where $m_{\gamma}$ is the reduced mass between the particles of the pair $\gamma$, and  $\mu_{\ell}$ is the reduced mass between the particle $\ell$ and the subsystem composed by the two particles of the pair $\gamma$, i.e., 
\begin{equation}\label{masses}
m_{23} = \frac{m_2m_3}{m_2+m_3}\;;\quad\quad \mu_1 = \frac{m_1(m_2+m_3)}{M} \qquad \textrm{with} \quad M = m_1+m_2+m_3,
\end{equation}
and similarly for the other pairs. We shall  assume conditions on the potentials $v_{\sigma}$ such that $\HH^{\ve}$ is a self-adjoint and lower bounded operator in $L^2(\RE^2)$, with a lower bound independent of $\ve$ (see Section \ref{s:fadeq}). The limiting Hamiltonian has the formal expression
\begin{equation}\label{rat}
\HH := \HH_0 +\sum_{\sigma} \alpha_{\sigma} \delta_{\sigma}\,.  
\end{equation}
Its rigorous definition as a self-adjoint, lower bounded operator in $L^2(\RE^2)$ will be given in  Section \ref{s:Halde}. 

Our main result is stated in the following:
\begin{theorem}\label{t:main}
Assume that $v_\sigma \in L^1(\RE,(1+|x|)^s dx)$ for some $s>0 $ and for all $\sigma = 23,31,12$. Moreover set $\alpha_\sigma = \int_{\RE} v_\sigma (x)  \, dx$. Then $\HH^{\ve}$ converges to $\HH$ in norm resolvent sense for $\ve \to 0$. 
\end{theorem}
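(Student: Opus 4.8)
The plan is to prove norm-resolvent convergence through Faddeev's equations, exploiting the fact that the three-body resolvent can be written in terms of a $3\times 3$ operator-matrix whose entries involve the two-body $T$-operators. I would first fix $z$ in a suitable region of the resolvent set (say, $z$ real and sufficiently negative, uniformly in $\ve$, which is available because of the uniform lower bound on $\HH^{\ve}$) and write the standard second-resolvent identity, organizing the Born series by pairs: setting $\VV^{\ve}_{\sigma} = B^{\ve}_{\sigma}\,|B^{\ve}_{\sigma}|^{1/2}$-type factorizations (or, more precisely, using the sign-definite square roots $u^{\ve}_{\sigma}, v^{\ve}_{\sigma}$ with $\VV^{\ve}_{\sigma}=u^{\ve}_{\sigma}v^{\ve}_{\sigma}$), one obtains a Konno--Kuroda / Faddeev representation of $(\HH^{\ve}-z)^{-1}$ in which the only $\ve$-dependent ingredient is a bounded operator-valued matrix $\mathcal{M}^{\ve}(z)$ built from the free resolvent $R_0(z)=(\HH_0-z)^{-1}$ sandwiched between the rescaled potential factors. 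The corresponding limiting object, giving $(\HH-z)^{-1}$ via the definition in Section~\ref{s:Halde}, is a matrix $\mathcal{M}^{0}(z)$ whose entries are rank-one-in-each-pair operators encoding the $\delta_{\sigma}$ and the coefficients $\alpha_{\sigma}$.

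The heart of the argument is then a convergence lemma for the rescaled one-body-type building blocks. The key step is to show that, for each pair $\sigma$, the operators obtained by compressing $R_0(z)$ (and its off-diagonal analogues $R_0(z)$ between coordinates of different pairs $\sigma\neq\ga$) between the factors $v^{\ve}_{\sigma}(x_\sigma)^{1/2}$ converge in operator norm, as $\ve\to 0$, to the corresponding limiting operators with $v_{\sigma}$ replaced by $\alpha_{\sigma}\delta_{\sigma}$. Concretely one uses the explicit (or almost explicit) integral kernel of $R_0(z)$ in the mixed Jacobi variables: in the variable $x_\sigma$ the kernel behaves like that of a one-dimensional resolvent, which is bounded and H\"older-continuous up to the diagonal, so that the rescaling $x_\sigma\mapsto\ve x_\sigma$ together with $v_\sigma\in L^1(\RE,(1+|x|)^s dx)$ lets one extract the factor $\alpha_\sigma=\int v_\sigma$ and control the error by the tail of $v_\sigma$ and by the H\"older modulus of the kernel; this is exactly where the weight $(1+|x|)^s$ with $s>0$ is used. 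The diagonal $\sigma=\sigma$ blocks require in addition the standard fact that $(1+R_0(z)\,\text{-sandwiched self-interaction})$ is, in the limit, the invertible operator $(1+\alpha_\sigma G_\sigma(z))$ appearing in the definition of $\HH$, so that the inverse of $\mathcal{M}^{\ve}(z)$ exists for small $\ve$ and converges to $(\mathcal{M}^0(z))^{-1}$.

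Once the norm convergence $\mathcal{M}^{\ve}(z)\to\mathcal{M}^{0}(z)$ is established together with uniform invertibility (which follows from convergence plus invertibility of the limit, via a Neumann-series perturbation argument), the conclusion is immediate: the Faddeev representation expresses $(\HH^{\ve}-z)^{-1}-(\HH-z)^{-1}$ as a finite sum of products of uniformly bounded operators, at least one factor of which is $\mathcal{M}^{\ve}(z)^{-1}-\mathcal{M}^{0}(z)^{-1}$ or one of the sandwiched-resolvent differences, all of which tend to zero in norm. Finally one removes the restriction to very negative real $z$ by the first resolvent identity and analyticity, obtaining norm-resolvent convergence for all $z$ in a common domain, which is the assertion of Theorem~\ref{t:main}.

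I expect the main obstacle to be the off-diagonal blocks of $\mathcal{M}^{\ve}(z)$, i.e.\ the terms $v^{\ve}_{\sigma}{}^{1/2}\,R_0(z)\,v^{\ve}_{\ga}{}^{1/2}$ with $\sigma\neq\ga$: here the two rescaled potentials live on \emph{transverse} hyperplanes $\{x_\sigma=0\}$ and $\{x_\ga=0\}$, whose intersection is a point (the triple-collision configuration), so one must check that no anomalous concentration occurs there and that the limiting operator is the expected product of the two single-pair $G$-operators. Controlling these terms in operator norm — rather than merely Hilbert--Schmidt — uniformly in $\ve$, and identifying the precise limit, is the technically delicate part; it is presumably handled by carefully exploiting the smoothing of $R_0(z)$ in the $y_\ell$ directions and the one-dimensional H\"older regularity in each $x$ direction, which is precisely what the Faddeev decomposition is designed to make tractable.
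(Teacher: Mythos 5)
Your strategy --- a Konno--Kuroda/Birman--Schwinger representation of $(\HH^\ve-z)^{-1}$ via a $3\times3$ matrix $\mathcal{M}^{\ve}(z)$ of sandwiched resolvents, norm convergence of its blocks, and a Neumann-series argument for uniform invertibility --- is a legitimate alternative organization to the paper's (which instead writes Faddeev's equations for the ``charge'' densities $\rho^{(m),\ve}$ in terms of the full two-body operators $\TT^\ve_\gamma(\la)$ and closes a system of coupled integral inequalities). But as written the proposal has genuine gaps, and they sit exactly where you yourself say the difficulty is ``presumably handled''. A first, fixable, issue is that ``$\mathcal{M}^{\ve}(z)\to\mathcal{M}^{0}(z)$ in operator norm'' does not literally make sense: for $\ve>0$ the matrix acts on $\bigoplus_\sigma L^2(\RE^2)$ (the Birman--Schwinger space carried by the factors $|v^\ve_\sigma|^{1/2}$), whereas the limiting object acts on $\bigoplus_\sigma L^2(\pi_\sigma)\cong\bigoplus_\sigma L^2(\RE)$; one must conjugate with explicit scaling isometries built from $|v_\sigma|^{1/2}$, as in the two-body treatment of \cite{AGHH05}, and prove convergence of the conjugated blocks on a fixed space.

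The more serious gap is that the convergence of the off-diagonal blocks is asserted, not proved, and it is the heart of the theorem. The paper's proof shows why this step cannot be waved through: the natural convergence estimate for the two-body objects is \emph{not} uniform in momentum --- Lemma \ref{l:tvet} only gives $|t^\ve_\gamma(\la;k,k')-\tau_\gamma(\la)|\le C\ve^{b}\bigl(|k|^{b}+|k'|^{b}+1\bigr)$, an error that grows polynomially in $k,k'$, and in the coupled equations the arguments of $t^\ve_\gamma$ are linear combinations of the momenta of \emph{different} pairs. To absorb this growth the paper works in the weighted norms $\sup_{q}\int dp\,|\cdot|^2(q^2+p^2+C_{123}\la)^{-b}$ and closes a system of six coupled inequalities (for $\rho^{(j),\ve}(\la;q,p)$ and for $\rho^{(j),\ve}(\la;-q-p,p)$) for $\la$ large, before extending to all $z\in\CO\setminus\RE$. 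Some analogous quantitative device is unavoidable on your route as well: without it, the operator-norm convergence of $|v^\ve_\sigma|^{1/2}R_0(z)|v^\ve_\gamma|^{1/2}$ for $\sigma\neq\gamma$, and hence the uniform invertibility of $\mathcal{M}^{\ve}(z)$, remains an open assertion rather than a step of the proof. Note also that the paper's choice of the resummed two-body operators $\TT^\ve_\gamma(\la)$ (rather than sandwiched free resolvents) reduces all the $\ve$-dependence to the single kernel estimate of Lemma \ref{l:tvet}; this is the main thing its organization buys, and it is what makes the coupled estimates tractable.
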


\begin{remark}
From the proof of the theorem it is clear that larger $s$ gives faster convergence speed, up to $s=1$. More precisely  for all $z\in\CO$ with $\Im z\neq 0 $ one has that
\begin{equation}\label{cori}
\|(\HH^\ve  -z)^{-1}- (\HH -z)^{-1}\|_{\Bnorm(L^2(\RE^2))} \leq C \ve^{\delta} \qquad \forall\, \delta < \min\{1,s\} . 
\end{equation}
\end{remark}

%%%%%In what follows we shall often use the Fourier transform, which we define so as to be unitary in $L^2(\RE^d)$. Explicitly, we denote by ${\hat{}\mbox{} } :L^2(\RE^d) \to L^2(\RE^d)$  the Fourier transform  
%%%%%\[\hat f(k) := \frac{1}{(2\pi)^{d/2}}  \int_{\RE^d} e^{-ikx} f(x) dx. \]
%%%%%The inverse Fourier transform is given by 
%%%%%\[\check f (x) :=  \frac{1}{(2\pi)^{d/2}}\int_{\RE^d} e^{ikx} f(k) dk. \]
%%%%%Moreover
%%%%%\[\widehat{(f*g)}(k) = (2\pi)^{d/2} \hat f (k) \, \hat g (k); \]
%%%%%\[\widehat{(fg)}(k) = \frac{1}{(2\pi)^{d/2}} (\hat f *\hat g) (k);\]
%%%%%\[(f,g)_{L^2(\RE^d)} = (\hat f,\hat g)_{L^2(\RE^d)}.  \]
%%%%%

The paper is organized as follows.\\
In Section \ref{s:fadeq} we show that $\HH^{\ve}$ is self-adjoint and lower bounded in $L^2(\RE^2)$, with a lower bound independent of $\ve$. Moreover, we write the resolvent of $\HH^{\ve}$ in the form of   Faddeev's equations in momentum space.\\
In Section \ref{s:Halde} we construct the limiting Hamiltonian $\HH$ as a self-adjoint and lower bounded operator in $L^2(\RE^2)$ and we find a suitable representation (in a form that resembles Faddeev's equations) for the resolvent in momentum space.\\
Section 4 is devoted to the proof of Theorem 1.  In particular, we first prove estimate \eqref{cori} for $z =-\lambda$, with $\lambda >0$ large enough, and then we extend the result to $z \in \CO \setminus \RE$. \\
We conclude the paper with two appendices. In Appendix \ref{app:A} we recall the derivation of Faddeev's equations, and the definitions and basic properties of the operators introduced in Section \ref{s:fadeq}. In Appendix \ref{app:B} we collect several explicit formulae and useful identities, mostly concerning the  operators introduced in Section \ref{s:Halde}. 

In what follows $C$ denotes a generic positive constant, independent of the parameters $\varepsilon$ and $\lambda$. 
%%%%%%
%SECTION
%%%%%%
\section{The approximating problem \label{s:fadeq}}

We denote by $\BB_0$ the sesquilinear form 
\begin{equation}\label{B0}
 \BB_0(\varphi,\psi)=  \frac{1}{2 m_\ga} \int_{\RE^2} \overline{\partial_{x_\ga} \varphi} \, \partial_{x_\ga} \psi \,  dx_\ga \, dy_{\ell} +
 \frac{1}{2 \mu_\ell} \int_{\RE^2} \overline{\partial_{y_\ell} \varphi} \, \partial_{y_\ell} \psi \,  dx_\ga \, dy_{\ell}
\qquad 
D(\BB_0) :=H^1(\RE^2)\times H^1(\RE^2).
\end{equation}
In what follows, with a slight abuse of notation,  we shall denote by the same symbol the corresponding quadratic form $\BB_0(\psi) \equiv \BB_0(\psi, \psi)$ with domain $H^1(\RE^2)$.%  The quadratic form $\BB_0$ is equivalent to the $H^1$-norm.

The quadratic form associated to $\HH^\ve$ is  
\[
\BB^\ve(\psi)= \BB_0(\psi) + \sum_{\sigma} \left(\psi,  \VV^\ve_\sigma \psi \right)_{L^2(\RE^2)}  \qquad 
D(\BB^\ve) :=H^1(\RE^2) .
\]
%with
%\begin{equation}\label{B0}
% \BB_0(\psi)=  \frac{1}{2 m_\ga} \int_{\RE^2} |\partial_{x_\ga} \psi|^2 dx_\ga \, dy_{\ell} +
% \frac{1}{2 \mu_\ell} \int_{\RE^2} |\partial_{y_\ell} \psi|^2 dx_\ga \, dy_{\ell}
%\qquad 
%D(\BB_0) :=H^1(\RE^2).
%\end{equation}
%The quadratic form $\BB_0$ is equivalent to the $H^1$-norm.

We note the inequality 
\begin{equation}\label{colors2}
\sup_{x\in\RE}\int_\RE dy |\psi(x,y)|^2 \leq  \eta  \left\|\partial_{x}\psi\right\|_{L^2(\RE^2)}^2+ \frac{1}{\eta}\left\|\psi\right\|_{L^2(\RE^2)}^2,
\end{equation}
which holds true for all $\eta>0$ (for a proof, see Eq. \eqref{bolletta} below). 
By Eq. \eqref{colors2}, and by the change of variables $x_\sigma /\ve \to x_\sigma $, it immediately follows that 
\[ \left|\left(\psi,  \VV^\ve_\sigma \psi \right)_{L^2(\RE^2)}\right| 
= \int_\RE dx_\sigma |v_\sigma(x_\sigma)| \int_\RE dy_\ell  |\psi(\ve x_\sigma,y_\ell)|^2 
\leq \|v_\sigma\|_{L^1(\RE)} \left(  \eta  \left\|\partial_{x_\sigma}\psi\right\|_{L^2(\RE^2)}^2+ \frac{1}{\eta}\left\|\psi\right\|_{L^2(\RE^2)}^2\right) 
  \]
for all  $\eta >0 $. Hence
 \[ \left|\sum_{\sigma} \left(\psi,  \VV^\ve_\sigma \psi \right)_{L^2(\RE^2)}  \right| \leq a\, \BB_0(\psi) + b \left\|\psi\right\|_{L^2(\RE^2)}^2
  \]
for some $0<a<1$ and $b>0$,  and by KLMN theorem the form $\BB^\ve$ is closed, semi-bounded, and defines a self-adjoint operator, coinciding with $\HH^\ve$, see, e.g., \cite{exner-blank-havlicek08}. Additionally, $\HH^\ve$ is bounded from below uniformly in $\ve$, i.e., there exists $\la_0>0$ such that $\inf\sigma(\HH^\ve) > -\la_0$ for all $\ve>0$.  
Concerning the proof of Eq. \eqref{colors2}, we note that it follows from   the identity
\begin{equation}\label{bolletta}
\int_{-\infty}^\infty dy_{\ell} \, \left|\psi(x,y_\ell)\right|^2 =  \int_{-\infty}^\infty dy_{\ell} \int_{-\infty}^x dx_{\gamma} \,\partial_{x_\gamma} \left|\psi(x_\gamma,y_\ell)\right|^2,
\end{equation}
and the chain of  inequalities 
\[\partial_{x_\gamma} \left|\psi(x_\gamma,y_\ell)\right|^2 \leq 2 |\partial_{x_\gamma}\psi(x_\gamma,y_\ell)| \,|\psi(x_\gamma,y_\ell)| \leq \eta |\partial_{x_\gamma}\psi(x_\gamma,y_\ell)|^2 + |\psi(x_\gamma,y_\ell)|^2 / \eta.\]

Since it is more convenient to formulate both the approximating and the limiting   problem in Fourier space, in what follows we introduce some notation concerning the variables in momentum space. We remark  that we define the Fourier transform  so as to be unitary in $L^2(\RE^d)$, see Appendix \ref{app:B} for the explicit definition. 

We denote by $k_\sigma$ the conjugate coordinate of $x_\sigma$ and by $p_\ell$ the conjugate coordinate of $y_\ell$. Let $\hat \HH_0$ be the operator unitarily equivalent to $\HH_0$ via Fourier transform and let $\hat \RR_0 (\la) = ( \hat \HH_0 +\la)^{-1}$, $\la > 0$. Both of them act as multiplication operators,  more precisely:
\[
\hat \HH_0 f ( k_\sigma , p_\ell) = \lf( \frac{k^2_\sigma}{2 m_\sigma} + \frac{p_\ell^2}{2\mu_\ell}\ri)
 f ( k_\sigma , p_\ell)\,;\qquad
\hat \RR_0 (\la)  f ( k_\sigma , p_\ell) = 
\lf( \frac{k^2_\sigma}{2 m_\sigma} + \frac{p_\ell^2}{2\mu_\ell} + \la\ri)^{-1} 
 f ( k_\sigma , p_\ell).
\]

For the reader's sake  we recall  that  different pairs of Jacobi coordinates are related by  the following formulae  
\begin{align}
&k_{31} = \frac{m_3M}{(m_2+m_3)(m_3+m_1)} p_1 - \frac{m_1}{m_3+m_1}k_{23}    &;& \qquad 
p_2  = -\frac{m_2}{m_2+m_3}p_1 - k_{23} \label{change1} \\ 
& k_{12} = -\frac{m_2M}{(m_2+m_3)(m_1+m_2)} p_1 - \frac{m_1}{m_1+m_2}k_{23}   &;& \qquad
p_3  = -\frac{m_3}{m_2+m_3}p_1 + k_{23} ,\label{change2} 
\end{align}
where $M$ is the total mass of the system, see Eq. \eqref{masses}. The other changes of coordinates are obtained by permutation of the indices in the formulae above. For example, if for sake of concreteness we fix $\sigma =\{ 23\}$ and $\ell=1$  we have
\begin{equation*}\label{H0four}
\hat \HH_0  f (k_{23},p_1) = \left(\frac{k_{23}^2}{2m_{23}} + \frac{p_1^2}{2\mu_1}\right) f(k_{23},p_1).
\end{equation*}

We can also write functions in the $p$'s coordinates only, for this reason we recall the change of variables 
\begin{equation}\label{change3}
k_{23}= -p_2 -\frac{m_2}{m_2+m_3} p_1\qquad ;\qquad p_1 = p_1.
\end{equation}
In the coordinates $(p_2,p_1)$ we have
\begin{equation} \label{change4}
\hat \HH_0 f (p_2,p_1) =  \left(\frac{p_2^2}{2m_{23}} + \frac{p_2 \cdot p_1}{m_3} + \frac{p_1^2}{2m_{13}}\right) f(p_2,p_1).
\end{equation}
We remark that in the latter formula we abused notation and used the symbol $f$ to denote the same function written in two different systems of coordinates, the $(k_\gamma, p_\ell)$-coordinates and the $p$-coordinates. \\
Analogous changes of coordinates  are obtained by permutations of the indices and by taking into account the identity $p_1+p_2+p_3=0$, for more explicit formulae we refer to \cite{faddeev_book}.
Similar formulae hold for $\hat \RR_0 (\la)$. \\

We introduce some notation before representing $\RR^\ve (\la) = (\HH^\ve + \la)^{-1}$ through Faddeev's equations. Here we always assume $\lambda > 0$ such that $\inf\sigma(\HH^\ve) > -\la$ for all $\ve>0$.  

Denote by $t_\gamma^\ve (\lambda;k_\gamma,k_\gamma')$ the integral kernel in Fourier  transform of the operator $\tt_\gamma^\ve(\lambda): L^2(\RE) \to L^2 (\RE)$ defined in Eq. \eqref{tt2}. One has that 
\begin{equation}\label{tt}
\hat\tt_\gamma^\ve(\lambda) f (k_\gamma) = \int_{\RE}  t_\gamma^\ve (\lambda;k_\gamma,k_\gamma') f(k_\gamma') dk_\gamma'.
\end{equation}
By taking the Fourier transform of Eq. \eqref{tt2} one infers that the kernel $t_\ga^\ve (\la)$ satisfies the following integral equation: 
\begin{equation}\label{t}
t_\gamma^\ve (\lambda;k_\gamma,k_\gamma')  = \frac{1}{\sqrt{2\pi}} \hat v_\gamma(\ve (k_\gamma - k_\gamma'))  
-  \frac{1}{\sqrt{2\pi}} \int_{\RE} dq \, \hat{v}_\gamma(\ve(k_\gamma - q)) \frac{1}{q^2/(2m_\gamma)+\lambda} t_\gamma^\ve (\lambda;q,k_\gamma').
\end{equation}
Hence, by Eq. \eqref{TT2}, in Fourier transform, the operator $\TT_\gamma^\ve(\lambda):  L^2(\RE^2) \to L^2 (\RE^2)$ is given by:
\begin{equation}\label{electric}
\hat\TT_\gamma^\ve(\lambda) f(k_\gamma,p_\ell) = \int_{\RE} dk_\gamma' \, t_\gamma^\ve(\lambda + p_\ell^2/(2\mu_\ell);k_\gamma,k_\gamma')  f(k_\gamma',p_\ell).
\end{equation}
We remark that in what follows, in particular in Eq. \eqref{rho1eq}, we shall rewrite the latter formula for  $\hat\TT_\gamma^\ve(\lambda)$ in the $p$'s coordinates.  Eq. \eqref{rho1eq} below is obtained by   taking into account the changes of variables  \eqref{change1}, \eqref{change2} and \eqref{change3}.

We are now ready to write down Faddeev's equations in explicit form (see Appendix \ref{app:A}, Eqs. \eqref{resolvent}, \eqref{faddeev}, and \eqref{hand}): let $\hat \RR^\ve(\la) $ be the conjugate operator to $ \RR^\ve(\la) $ then we have
\begin{equation}\label{horse}
\hat \RR^\ve(\lambda) f = \hat \RR_0(\lambda) f + \hat  \RR_0(\la)\sum_{m=1}^3\rho^{(m),\ve}(\la).
\end{equation}
where the functions  $\rho^{(m),\ve}(\la)$ satisfy the system of equations obtained by  permuting indices in  
\begin{equation} \label{shepard}
\rho^{(1),\ve}(\la) =    -  \hat\TT_{23}^\ve (\la)\hat \RR_0(\lambda) f  \\ 
- \hat \TT_{23}^\ve (\la) \hat \RR_0(\lambda)\rho^{(2),\ve}(\lambda)
-  \hat\TT_{23}^\ve (\la) \hat \RR_0(\lambda)\rho^{(3),\ve}(\lambda).
\end{equation}
In the coordinates $(p_2,p_1)$ Eq.  \eqref{shepard} reads:
\begin{equation}\label{rho1eq}\begin{aligned}
\rho^{(1),\ve}(\la;q,p) = 
& - \int_{\RE} dq' \, \frac{t_{23}^\ve(\lambda+p^2/(2\mu_1);-q-\frac{m_2}{m_2+m_3}p,-q'-\frac{m_2}{m_2+m_3}p)}{\frac{{q'}^2}{2m_{23}} + \frac{q' \cdot p}{m_3} + \frac{p^2}{2m_{13}}+\la} f(q',p)
\\
& -  \int_{\RE} dq' \, \frac{t_{23}^\ve (\lambda+p^2/(2\mu_1);-q-\frac{m_2}{m_2+m_3}p,-q'-\frac{m_2}{m_2+m_3}p)}{\frac{{q'}^2}{2m_{23}} + \frac{q' \cdot p}{m_3} + \frac{p^2}{2m_{13}}+\la} \rho^{(2),\ve}(\la;-p-q',q')  \\ 
& - \int_{\RE} dq' \, \frac{t_{23}^\ve (\lambda+p^2/(2\mu_1);-q-\frac{m_2}{m_2+m_3}p,q'+\frac{m_3}{m_2+m_3}p)}{\frac{{q'}^2}{2m_{23}} + \frac{  q' \cdot p}{m_2} + \frac{p^2}{2m_{12}}+\la} \rho^{(3),\ve}(\la;p,q')  .
\end{aligned}\end{equation}
We remark that the functions $\rho^{(m),\ve}(\la)$ are always understood to be written in their ``natural'' variables, i.e., $\rho^{(1),\ve}(\la)=\rho^{(1),\ve}(\la;p_2,p_1)$, $\rho^{(2),\ve}(\la)=\rho^{(2),\ve}(\la;p_3,p_2)$, and $\rho^{(3),\ve}(\la)=\rho^{(3),\ve}(\la;p_1,p_3)$.
%%%%
%%%%Notice that we are implicitly doing a small abuse of notation since we are using the same symbol to denote $f$ in different coordinate systems.

%%%%%%%%%
%SECTION
%%%%%%%%%
\section{The limiting problem \label{s:Halde}}
%%%%In this section we discuss the rigorous definition of the Hamiltonian for three particles interacting through contact (also called zero-range or delta) interaction. We intend to define the operator $\HH$ formally written as in Eq. \eqref{rat}.
In this section we discuss the rigorous definition of the Hamiltonian $\HH$ describing  three particles interacting through contact interactions and formally written  as in Eq. \eqref{rat}.
%\[\HH^{3body}_{\alpha \delta} : = - \frac{1}{2m_1} \Delta_1 - \frac{1}{2m_2} \Delta_2 - \frac{1}{2m_3} \Delta_3+\alpha_{12} \delta_{12} +\alpha_{23}\delta_{23} +\alpha_{31}\delta_{31} = \HH_0^{3body} + \sum_{\sigma} \alpha_\sigma\delta_{\sigma}\]
%where we denoted by $\delta_{23}$ the Dirac-delta distribution supported on the coincidence plane of the second and third particle  here understood as a distribution on $\mathcal S(\RE^3)$ (and similarly for $\delta_{12}$ and $\delta_{31}$), and $\alpha_\sigma$ are some fixed real parameters which measure the strength of the interaction. 
%
%As for the case of smooth potentials we decide to work in the system of the center of mass. Hence, we shall be interested in the definition of the Hamiltonian 
%\begin{equation*}\label{Hred-delta}
%\HH_{\alpha\delta}  = \HH_0 +\sum_{\sigma} \alpha_\sigma \delta_{\sigma},
%\end{equation*}
%where $\HH_0$ is the ``free'' Hamiltonian in the Hilbert space $L^2(\RE^2)$ defined in Eq. \eqref{Hred} (see also Eqs. \eqref{H0kp} and \eqref{H0qp}). Here we denoted by $\delta_{23}$  the Dirac-delta distribution  supported on the coincidence plane of the second and third particle understood as a distribution on $\mathcal S(\RE^2)$ (and similarly for $\delta_{12}$ and $\delta_{31}$). 
%\\

We shall denote by $\pi_{\gamma}$ the coincidence line (hyperplane) of the particles in the pair $\gamma$, i.e., in the Jacobi coordinates $(x_{\gamma},y_l)$, $\pi_{\gamma}$ is identified by $x_{\gamma} = 0.$ The hyperplanes $\pi_\gamma$ identify six regions $\Gamma_r,$ $r=1,\dots 6.$ For the sake of clarity we write explicitly the definition of $\Gamma_r$ in the coordinates $(x_{23},y_1),$
 obviously we could have equivalently used any other pair of Jacobi coordinates.
\[
	\begin{aligned}
		\Gamma_1 &=\left\{(x_{23},y_1)\bigg|\;
		\begin{gathered} x_{23}\geq0,\\
			y_1<-\frac{m_3}{m_2+m_3}x_{23}
		\end{gathered}\right\},
		& \Gamma_2 &=\left\{(x_{23},y_1)\bigg|\;
		\begin{gathered}
			x_{23}\geq0,\\
			-\frac{m_3}{m_2+m_3}x_{23}<y_1<\frac{m_2}{m_2+m_3}x_{23}
		\end{gathered}\right\},\\
		\Gamma_3 &=\left\{(x_{23},y_1)\bigg|\;
		\begin{gathered}
			x_{23}\geq0,\\
			y_1>\frac{m_2}{m_2+m_3}x_{23}
		\end{gathered}\right\}	,& \Gamma_4 &=\{(x_{23},y_1)|\, (-x_{23},-y_1)\in \Gamma_1\},\\
		\Gamma_5&=\{(x_{23},y_1)|\, (-x_{23},-y_1)\in \Gamma_2\},& \Gamma_6 &=\{(x_{23},y_1)|\, (-x_{23},-y_1)\in \Gamma_3\}.
	\end{aligned}
\] 

For any function $\psi\in H^s(\RE^2)$ with $s> 1/2$ we denote by $\psi|_{\pi_\gamma}$ its trace on the hyperplane $\pi_\gamma$ and we recall that the map   $\psi \to \psi|_{\pi_\gamma}$ extends to a continuous  one from $H^{s} (\RE^n)$ to $H^{s-1/2}(\RE^{n-1})$ for any $n\in\NA$ and $s> 1/2$. Sometimes, when we need to make explicit the dependence of $\psi \to \psi|_{\pi_\gamma}$  on the coordinate $y_\ell,$ we shall simply write $\psi \to \psi|_{\pi_\gamma}(y)$ omitting the suffix $\ell$ when no misunderstanding is possible. 
%for all $\ve>0$, see, e.g., \cite[Ch. I]{lions-magenes12}. 

%We start by defining the sesquilinear form associated to the (selfadjoint) operator $\HH_0$. We set  
%\begin{equation}\label{quadratic}
%\BB_0(\varphi,\psi) := \int_{\RE} dx_\gamma \,dy_\ell\, \left[ \frac{1}{2m_{\gamma}} \overline{\partial_{x_{\gamma}} \varphi}\,\partial_{x_{\gamma}} \psi
%+ \frac{1}{2\mu_\ell} \overline{\partial_{y_\ell } \varphi} \, \partial_{y_\ell } \psi \right],  \qquad D(\BB_0) := H^1(\RE^2)\times H^1(\RE^2) .  
%\end{equation}
%With a slight abuse of notation, we shall denote by the same letter the associated quadratic form $\BB_0(\psi)  := \BB_0(\psi,\psi) $ with domain $D(\BB_0):= H^1(\RE^2)$, the latter being the quadratic form associated to the operator $\HH_0$. We remark that the integral in Eq. \eqref{quadratic} has  the same form in any system of Jacobi coordinates. 

To give a rigorous definition of the operator $\HH$ we start with a natural choice of the quadratic form: since the potential $\alpha_\gamma \delta_\gamma$ is supported by the hyperplane $\pi_\gamma$, we set 
\[
\BB(\varphi,\psi):= \BB_0(\varphi,\psi) + \sum_{\sigma} \alpha_\sigma \left(\varphi|_{\pi_\sigma}, \psi|_{\pi_\sigma}\right)_{L^2(\pi_\sigma)},  \qquad D(\BB) := H^1(\RE^2)\times H^1(\RE^2)
\]
where the definition of $\BB_0$ was  given in Eq. \eqref{B0}.
With a slight abuse of notation, we denote by the same letter the corresponding quadratic form: 
\[
\BB(\psi):= \BB_0(\psi) + \sum_{\sigma} \alpha_\sigma \left\|\psi|_{\pi_\sigma}\right\|_{L^2(\pi_\sigma)}^2  \qquad D(\BB) :=H^1(\RE^2).
\]
\begin{remark}%\label{r:bbBalphadelta}
By Eq. \eqref{colors2}, it immediately follows that 
\[ \sum_{\sigma}| \alpha_\sigma| \left\|\psi|_{\pi_\sigma}\right\|_{L^2(\pi_\sigma)}^2  \leq a\, \BB_0(\psi) + b \left\|\psi\right\|_{L^2(\RE^2)}^2
  \]
for some $0<a<1$ and $b>0$, hence by KLMN theorem the form $\BB$ is closed, semi-bounded, and defines a self-adjoint operator bounded from below, see also \cite{brasche-exner-kuperin-jmaa94}. 
\end{remark}

We denote by $\Pi$ the union of the hyperplanes $\pi_{12}$, $\pi_{23}$, $\pi_{31}$: $\Pi = \cup_{\sigma} \pi_\sigma $.\\
Moreover we denote by  $ \left[\partial_{x_\gamma}\psi\right]_{\pi_\gamma} $ the jump of the normal derivative of the function $\psi$ across the plane $\pi_\gamma$, i.e., 
\[\left[\partial_{x_\gamma}\psi\right]_{\pi_\gamma}  \equiv \left[\partial_{x_\gamma}\psi\right]_{\pi_\gamma} (y_\ell) := 
\lim_{\eta\to0^+}\left(\partial_{x_\gamma}\psi (\eta,y_\ell) - \partial_{x_\gamma}\psi(-\eta,y_\ell)\right).  \]

\begin{theorem}The self-adjoint operator associated to the closed and semi-bounded quadratic form $\BB$ is 
\begin{gather}
D(\HH):=\left\{ \psi \in H^2(\RE^2\backslash \Pi) \cap H^1(\RE^2) | \; \left[\partial_{x_\gamma}\psi\right]_{\pi_\gamma} = 2m_\gamma\alpha_\gamma \psi|_{\pi_\gamma}\quad \forall \; \gamma\right\}\label{HHalde1}\\
\HH \psi = \HH_0\psi \qquad \text{on $\RE^2\backslash \Pi$}.\label{HHalde2}
\end{gather}
\end{theorem}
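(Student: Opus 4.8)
The plan is to prove that the self-adjoint operator associated by the representation theorem to the closed, semi-bounded form $\BB$ — call it $\HH_\BB$ — coincides with the operator $\HH$ explicitly described by \eqref{HHalde1}--\eqref{HHalde2}. Since neither operator is a priori known to contain the other, I would establish the two inclusions $\HH\subseteq\HH_\BB$ and $\HH_\BB\subseteq\HH$ separately; their conjunction yields $\HH=\HH_\BB$ and, in particular, the self-adjointness of $\HH$. The common engine of both inclusions is a Green's identity on the six sectors $\Gamma_r$ cut out by $\Pi$.

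For $\HH\subseteq\HH_\BB$, fix $\psi\in D(\HH)$. Because $\psi\in H^2(\RE^2\setminus\Pi)\cap H^1(\RE^2)$ it has traces and one-sided conormal derivative traces on each piece of $\Pi$, and $\HH_0\psi\in L^2(\RE^2)$. For $\varphi\in H^1(\RE^2)=D(\BB)$ I would integrate by parts over each $\Gamma_r$ and add: the interior terms reassemble into $(\varphi,\HH_0\psi)_{L^2(\RE^2)}$; the tangential contributions along each $\pi_\gamma$ cancel because $\psi$ and its tangential derivative are single-valued across $\pi_\gamma$ (as $\psi\in H^1(\RE^2)$); and the normal contributions from the two sides of $\pi_\gamma$ combine, using that the conormal derivative of $\BB_0$ along $\pi_\gamma$ is $\tfrac1{2m_\gamma}\partial_{x_\gamma}$, into the jump term. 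One gets
\[
\BB(\varphi,\psi)=(\varphi,\HH_0\psi)_{L^2(\RE^2)}+\sum_{\gamma}\Bigl(\varphi|_{\pi_\gamma},\ \alpha_\gamma\psi|_{\pi_\gamma}-\tfrac1{2m_\gamma}\left[\partial_{x_\gamma}\psi\right]_{\pi_\gamma}\Bigr)_{L^2(\pi_\gamma)},
\]
and the boundary condition $\left[\partial_{x_\gamma}\psi\right]_{\pi_\gamma}=2m_\gamma\alpha_\gamma\psi|_{\pi_\gamma}$ annihilates the sum. Hence $\BB(\varphi,\psi)=(\varphi,\HH_0\psi)$ for every $\varphi\in D(\BB)$, so $\psi\in D(\HH_\BB)$ and $\HH_\BB\psi=\HH_0\psi=\HH\psi$.

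For $\HH_\BB\subseteq\HH$, take $\psi\in D(\HH_\BB)$, $f=\HH_\BB\psi$, so $\psi\in H^1(\RE^2)$ and $\BB(\varphi,\psi)=(\varphi,f)$ for all $\varphi\in H^1(\RE^2)$. Testing against $\varphi\in C_c^\infty(\RE^2\setminus\Pi)$ yields $\HH_0\psi=f$ in $\mathcal D'(\RE^2\setminus\Pi)$, whence by interior elliptic regularity (and estimates up to the flat pieces of $\Pi$, summed over a locally finite cover with constants uniform away from the triple point $x_\gamma=y_\ell=0$) one gets $\psi\in H^2(\RE^2\setminus\Pi)$ and $\HH_0\psi=f$ a.e. Now the Green's identity of the previous step is legitimate for this $\psi$; subtracting the two expressions for $\BB(\varphi,\psi)$ gives $\sum_\gamma(\varphi|_{\pi_\gamma},\ \alpha_\gamma\psi|_{\pi_\gamma}-\tfrac1{2m_\gamma}[\partial_{x_\gamma}\psi]_{\pi_\gamma})_{L^2(\pi_\gamma)}=0$ for all $\varphi\in H^1(\RE^2)$. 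Localising $\varphi$ near one line $\pi_\gamma$ (away from the triple point) and invoking density of the traces of $H^1(\RE^2)$ functions in $L^2(\pi_\gamma)$, each summand vanishes, i.e. the jump condition of \eqref{HHalde1} holds for every $\gamma$. Therefore $\psi\in D(\HH)$ and $\HH\psi=f=\HH_\BB\psi$, which completes the identification.

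The step I expect to be the main obstacle is the global $H^2(\RE^2\setminus\Pi)$ regularity in the second inclusion, and specifically the behaviour at the triple-collision point, where $\pi_{12}$, $\pi_{23}$, $\pi_{31}$ meet and $\RE^2\setminus\Pi$ becomes a union of six plane sectors: one must check that the transmission problem (continuity of $\psi$, $L^2$-jump of the conormal derivative controlled by $\psi$ itself, with the same elliptic operator on both sides) does not produce corner singularities that would break $H^2$. Since $\HH_0$ is a constant-coefficient strongly elliptic operator and the interface conditions are of natural Robin/transmission type, I expect $H^2$ on each sector, but this is where care is needed. A convenient way to circumvent it — and the one most in the spirit of the rest of the paper — is to construct $(\HH_\BB+\la)^{-1}$ for large $\la$ explicitly, writing $\psi=\RR_0(\la)g+\sum_\gamma\RR_0(\la)(\text{single-layer density on }\pi_\gamma)$ and solving the resulting linear system for the densities (the limiting analogue of Faddeev's equations, invertible for $\la$ large because the off-diagonal inter-hyperplane terms are small); the domain \eqref{HHalde1}--\eqref{HHalde2} can then simply be read off from this formula, and surjectivity of $\HH\pm i$ gives self-adjointness directly. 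All remaining ingredients — Green's identity on the $\Gamma_r$, the trace/jump bookkeeping, density of $H^1$-traces, interior elliptic regularity — are routine.
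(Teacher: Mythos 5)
Your proposal is correct and its core mechanism -- the representation theorem for the form $\BB$ combined with a Green's identity on the six sectors $\Gamma_r$, testing first against $\varphi\in C^\infty_0(\Gamma_r)$ to get the equation $\HH_0\psi=f$ off $\Pi$ and then against $\varphi$ supported in a union of adjacent sectors to extract the jump condition -- is exactly the paper's argument. Two points of comparison are worth making. First, you are more complete than the paper: you prove both inclusions $\HH\subseteq\HH_\BB$ and $\HH_\BB\subseteq\HH$, whereas the paper's proof only carries out the second one (it shows that every $\psi$ in the form-domain operator's domain satisfies \eqref{HHalde1}--\eqref{HHalde2}, and never verifies that, conversely, every $\psi$ satisfying those conditions lies in $D(\HH_\BB)$ -- which is needed to claim the domains are \emph{equal} rather than one contained in the other). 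Your first inclusion, via the same Green's identity read in the opposite direction, fills that in. Second, the step you single out as the main obstacle -- $H^2$ regularity up to the boundary of each plane sector, in particular at the triple point where the three lines $\pi_\gamma$ meet -- is precisely the step the paper passes over silently: testing against $C^\infty_0(\Gamma_1)$ only yields $\HH_0\psi=f$ distributionally and $H^2_{\mathrm{loc}}$ in the interior, and the jump from there to $\psi\in H^2(\Gamma_1)$ is asserted without argument. Your proposed workaround, constructing $(\HH_\BB+\la)^{-1}$ explicitly as $\RR_0(\la)$ plus single-layer potentials on the $\pi_\gamma$ and reading the domain off that formula, is in fact exactly what the paper's subsequent theorem (the resolvent formula $\RR(\lambda)=\RR_0(\lambda)-\GG(\lambda)(\ID+\AA\MM(\lambda))^{-1}\AA\breve\GG(\lambda)$) delivers, so that route is fully consistent with the rest of the paper and arguably the cleaner way to close the regularity gap.
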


\begin{proof}
	According to the general theory the operator associated to $\BB$ is defined by
	\begin{gather*}
		D(\HH):=\{\psi\in D(\BB)|\, \exists f\in L^2(\RE^2)\text{ s.t. } \BB(\varphi,\psi)=(\varphi,f)\;\forall \varphi\in D(\BB)\}\\
		\HH\psi=f.
	\end{gather*}
	Let $\varphi\in C^\infty_0(\Gamma_1)\subset D(\BB)$ and $\psi\in D(\HH).$ Then
	\[
		(\varphi,f)=\BB(\varphi,\psi)=\BB_0(\varphi,\psi)
	\]
	hence $\psi\in H^2(\Gamma_1)$ and $f=\HH_0\psi$ in $\Gamma_1.$ Repeating the argument  for $\Gamma_r,\,r=2,\dots, 6$ we conclude $\psi\in D(\HH)$ implies $\psi\in H^2(\RE^2\backslash \Pi)$ and $f=\HH_0\psi$ on $\RE^2\backslash \Pi.$ This proves Eq. \eqref{HHalde2}.
	
	It remains to show the validity of the boundary conditions in Eq. \eqref{HHalde1}: $[\partial_{x_\gamma}\psi]_{\pi_\gamma}=2m_\gamma \alpha_\gamma\psi|_{\pi_\gamma}$ $\forall\gamma$. To this end we consider $\varphi\in C^\infty_0(\Gamma_6\cup \Gamma_1)\in D(\BB)$. Using for definiteness the coordinates $(x_{23},y_1),$ we have
	\begin{equation}\label{BGamma1}
		(\varphi,f)=\BB(\varphi,\psi)=\BB_0(\varphi,\psi)+\alpha_{23}\int_{-\infty}^0 dy_1 \overline{\varphi\big|_{\pi_{23}}(y_1)}\psi\big|_{\pi_{23}}(y_1).
	\end{equation}
	Let  $\pi_{23,\delta}=\{(x_{23},y_1):|x_{23}|<\delta,y_1<0\}$, $\Gamma_6^{\delta}=\Gamma_6\backslash \pi_{23,\delta},\Gamma_1^{\delta}=\Gamma_1\backslash \pi_{23,\delta}.$ Then, by Eq. \eqref{HHalde2} it follows
	\begin{equation}\label{BGamma2}
		(\varphi,f)=\lim_{\delta\to0}\biggl[\int_{\Gamma_6^\delta}dx_{23}dy_1\,\overline{\varphi}\, \HH_0\psi+\int_{\Gamma_1^\delta}dx_{23}dy_1\,\overline{\varphi}\,\HH_0\psi\biggr].
	\end{equation}
	On the other hand, 
\begin{equation*}
			\BB_{0}(\varphi,\psi)=\lim_{\delta\to0}  \int_{\Gamma_6^\delta\cup \Gamma_1^\delta} \frac{1}{2 m_{23}} \overline{\partial_{x_{23}} \varphi} \, \partial_{x_{23}} \psi  +
 \frac{1}{2 \mu_1}  \overline{\partial_{y_1} \varphi} \, \partial_{y_1} \psi \,  dx_{23} \, dy_{1}.
	\end{equation*}
Integrating by parts and taking the limit for $\delta\to0$ on the boundary term, one obtains 
	\begin{equation}\label{BGamma3}
			\BB_{0}(\varphi,\psi)=\lim_{\delta\to0} \biggl[\int_{\Gamma_6^\delta}dx_{23}dy_1\overline{\varphi}\,\HH_0\psi+\int_{\Gamma_1^\delta}dx_{23}dy_1\overline{\varphi}\,\HH_0\psi\biggr] - \frac{1}{2m_{23}}\int_{-\infty}^0 dy_1\,\overline{\varphi\big|_{\pi_{23}}(y_1)}\,[\partial_{x_{23}}\psi]_{\pi_{23}}(y_1).
	\end{equation}
	By Eqs. \eqref{BGamma1},\eqref{BGamma2},\eqref{BGamma3} we conclude 
	\[
		\int_{-\infty}^0 dy_1\overline{\varphi\big|_{\pi_{23}}(y_1)}\biggl[\alpha_{23}\psi\big|_{\pi_{23}}(y_1)-\frac{1}{2m_{23}} [\partial_{x_{23}}\psi]_{\pi_{23}}(y_1)
		\biggr]=0 \qquad \forall \varphi\in C^{\infty}_0(\Gamma_6\cup\Gamma_1).
	\]
	Hence 
	\[
		\left[\partial_{x_\gamma}\psi\right]_{\pi_{23}} = 2m_{23}\alpha_{23}\, \psi|_{\pi_{23}}
	\]
	on $\pi_{23}^-=\{(x_{23},y_1): x_{23}=0,\,y_1<0\}$. Repeating the argument for $\Gamma_i\cup\Gamma_j, \,i<j$ we conclude the proof.
\end{proof}
%\begin{theorem}
%The operator defined by Eqs. \eqref{HHalde1} - \eqref{HHalde2} can be characterized as
%\begin{equation}\label{HHalde1}
%D(\HH_{\alpha\de}):=\left\{ \psi  = \phi_\la + \GG_{-\la} \, \xi|\; \phi_\la \in H^2(\RE^2)\; \left[\partial_{x_\gamma}\psi\right]_{\pi_\gamma} = \alpha_\gamma \psi|_{\pi_\gamma}\quad \forall \; \gamma\right\}
%\end{equation}
%\begin{equation}\label{HHalde2}
%\HH_{\alpha\de} \psi = \HH_0\psi \qquad \forall (x,y) \notin \Gamma . 
%\end{equation}
%
%\end{theorem}

\newcommand{\qq}{\boldsymbol{q}}
\newcommand{\MM}{\mathbf{M}} 
\renewcommand{\AA}{\mathbf{A}}
\renewcommand{\ID}{\mathbf{I}}

In the following we find the expression of the resolvent operator $(\HH+\lambda)^{-1}$ for $\la>0$  such that $\inf\sigma(\HH) > -\la$.  First we introduce several operators. Let 
\[\breve  \GG(\lambda) :    L^2(\RE^2)\to  L^2(\pi_{23})\oplus L^2(\pi_{31})\oplus L^2(\pi_{12}) \]
\[\breve  \GG(\lambda) :=(\breve G_{23}(\lambda),\breve G_{31}(\lambda),\breve G_{12}(\lambda))\]
with $\breve G_{\gamma}(\lambda): L^2(\RE^2) \to L^2 (\pi_\gamma)$ defined by 
\begin{equation}\label{breveG}
\breve G_{\gamma}(\lambda) f := \RR_0(\lambda) f|_{\pi_{\gamma}}.
\end{equation}
 Let 
\[ \GG(\lambda) :    L^2(\pi_{23})\oplus L^2(\pi_{31})\oplus L^2(\pi_{12}) \to  L^2(\RE^2) \]
\[\GG(\lambda) :=\breve \GG(\lambda)^*.\]
Hence,  for $\qq=(q^{(1)},q^{(2)},q^{(3)})\in L^2(\pi_{23})\oplus L^2(\pi_{31})\oplus L^2(\pi_{12})$ one has 
\begin{equation*}
\GG(\lambda)\qq=\sum_{\gamma}G_\gamma(\lambda)q^{(\ell)}
\end{equation*}
where $G_\gamma(\lambda): L^2(\pi_{\gamma})\to L^2(\RE^2)$  is the adjoint of $\breve G_{\gamma}(\lambda)$. We note that the action of $G_{\gamma}(\lambda)$ is formally given by 
\begin{equation*}\label{Gg}
	G_\gamma(\lambda)q^{(\ell)}=\RR_0(\la) (q^{(\ell)}\delta_{\gamma}).
\end{equation*}
%where $\rr_0^{(\ell)}(\lambda)=(\hh_0^{(\ell)}+\lambda)^{-1}$ with $\hh_0^{(\ell)},\, \hh_{0,\gamma}$ defined in Section \ref{ss:onepart}.
We refer to  $\GG(\lambda)\qq$ as the potential produced by the charges $\qq$. 
 Note  also that, as a matter of fact, the spaces $L^2(\pi_\gamma)$ can be identified with $L^2(\RE,dy_\ell)$. Finally, we introduce two matrix operators acting on $L^2(\pi_{23})\oplus L^2(\pi_{31})\oplus L^2(\pi_{12}).$ The operator $\MM$ defined by 
 \[(\MM(\lambda))_{\gamma\sigma}:= M_{\gamma\sigma}(\lambda) ;\qquad M_{\gamma\sigma}(\lambda) : L^2(\pi_\sigma) \to L^2(\pi_\gamma)\]
with 
\begin{equation*}\label{Mgs}
M_{\gamma\sigma}(\lambda) q :=  G_\sigma (\lambda) q|_{\pi_\gamma}, \qquad q\in L^2(\pi_\sigma),
\end{equation*}
and the constant matrix $\AA$ with components
\[
	A_{\gamma\sigma} = 
	\left\{
	\begin{aligned}
		&\alpha_\gamma\qquad & \gamma = \sigma \\ 
		&0& \gamma\neq \sigma.
	\end{aligned}\right.
\]
Denote moreover by  $\ID$ the identity operator in $L^2(\pi_{23})\oplus L^2(\pi_{31})\oplus L^2(\pi_{12}) $. 

\begin{theorem}%\label{t:rHalphadelta}
For all $\lambda>0$ sufficiently large one has
\[
	\RR(\lambda) =(\HH+\lambda)^{-1}= \RR_0(\lambda) - \GG(\lambda) \left(\ID + \AA \MM(\lambda)\right)^{-1}\AA \breve\GG(\lambda).
\]
\end{theorem}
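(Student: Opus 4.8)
The plan is to use the standard Kreĭn-type resolvent identity: I would show directly that the operator on the right-hand side, call it $R(\lambda)$, is bounded, maps $L^2(\RE^2)$ into the domain $D(\HH)$ of the previous theorem, and satisfies $(\HH+\lambda)R(\lambda)=\ID$. Since $\HH$ is self-adjoint and lower bounded, $-\lambda$ lies in its resolvent set for $\lambda$ large, so $(\HH+\lambda)$ is a bijection of $D(\HH)$ onto $L^2(\RE^2)$, and this forces $R(\lambda)=(\HH+\lambda)^{-1}$.

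First I would make the formula meaningful by showing that $\ID+\AA\MM(\lambda)$ is boundedly invertible for $\lambda$ large. Using the explicit Fourier kernels of $\RR_0(\lambda)$ and of the trace maps (Appendix \ref{app:B}), one checks that $\MM(\lambda)$ is bounded on $L^2(\pi_{23})\oplus L^2(\pi_{31})\oplus L^2(\pi_{12})$ with $\|\MM(\lambda)\|\to0$ as $\lambda\to+\infty$; since $\AA$ is a bounded constant matrix, a Neumann series gives the claim. Likewise $\breve\GG(\lambda)$ is bounded $L^2(\RE^2)\to L^2(\pi_{23})\oplus L^2(\pi_{31})\oplus L^2(\pi_{12})$ by the trace estimate \eqref{colors2} applied to $\RR_0(\lambda)f\in H^1(\RE^2)$, and $\GG(\lambda)=\breve\GG(\lambda)^*$ is its adjoint, so $R(\lambda)$ is a bounded operator on $L^2(\RE^2)$.

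Next, fix $f\in L^2(\RE^2)$, set $\qq:=-(\ID+\AA\MM(\lambda))^{-1}\AA\breve\GG(\lambda)f$ and $\psi:=R(\lambda)f=\RR_0(\lambda)f+\GG(\lambda)\qq=\RR_0(\lambda)f+\sum_\gamma G_\gamma(\lambda)q^{(\ell)}$. The term $\RR_0(\lambda)f$ belongs to $H^2(\RE^2)$, has no normal-derivative jump, and restricts to $\breve G_\gamma(\lambda)f$ on $\pi_\gamma$. For the single-layer terms I would establish the regularity statement: $G_\gamma(\lambda)q\in H^1(\RE^2)\cap H^2(\RE^2\setminus\pi_\gamma)$, it is continuous across $\pi_\gamma$, and, matching the $\delta_\gamma$-singular parts of $-\tfrac{1}{2m_\gamma}\partial_{x_\gamma}^2$ in the distributional identity $(\HH_0+\lambda)G_\gamma(\lambda)q=q\,\delta_\gamma$, its normal derivative jumps by $[\partial_{x_\gamma}G_\gamma(\lambda)q]_{\pi_\gamma}=-2m_\gamma q$. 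Moreover, since two distinct coincidence hyperplanes meet only at the origin, elliptic regularity shows $G_\sigma(\lambda)q$ is of class $H^2$ near $\pi_\gamma\setminus\{0\}$ when $\sigma\neq\gamma$, hence has no normal-derivative jump there (a single point being negligible in $L^2(\pi_\gamma)$). Collecting these facts, $\psi\in H^2(\RE^2\setminus\Pi)\cap H^1(\RE^2)$ with $\psi|_{\pi_\gamma}=\breve G_\gamma(\lambda)f+(\MM(\lambda)\qq)_\gamma$ and $[\partial_{x_\gamma}\psi]_{\pi_\gamma}=-2m_\gamma q^{(\ell)}$ for every $\gamma$.

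Finally, written componentwise the defining relation $(\ID+\AA\MM(\lambda))\qq=-\AA\breve\GG(\lambda)f$ reads $q^{(\ell)}=-\alpha_\gamma\big(\breve G_\gamma(\lambda)f+(\MM(\lambda)\qq)_\gamma\big)=-\alpha_\gamma\,\psi|_{\pi_\gamma}$, so $[\partial_{x_\gamma}\psi]_{\pi_\gamma}=-2m_\gamma q^{(\ell)}=2m_\gamma\alpha_\gamma\,\psi|_{\pi_\gamma}$ for all $\gamma$, i.e.\ $\psi\in D(\HH)$. On $\RE^2\setminus\Pi$ one has $(\HH+\lambda)\psi=(\HH_0+\lambda)\RR_0(\lambda)f+\sum_\gamma(\HH_0+\lambda)G_\gamma(\lambda)q^{(\ell)}=f+\sum_\gamma q^{(\ell)}\delta_\gamma=f$, the delta terms being supported on $\Pi$. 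Hence $R(\lambda)f\in D(\HH)$ and $(\HH+\lambda)R(\lambda)f=f$ for every $f$, which identifies $R(\lambda)=(\HH+\lambda)^{-1}$ as explained above. The main obstacle is the content of the third paragraph — the precise mapping and regularity properties of the single-layer potentials $G_\gamma(\lambda)$, the computation of the normal-derivative jump, and the control of the behaviour at the triple-collision point where all three hyperplanes cross — together with the quantitative estimate on $\MM(\lambda)$; both are handled via the explicit Fourier formulae of Appendix \ref{app:B}.
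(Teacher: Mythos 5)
Your proposal is correct and follows essentially the same route as the paper: invert $\ID+\AA\MM(\lambda)$ for large $\lambda$ via the $O(\lambda^{-1/2})$ decay of the kernels of $M_{\gamma\sigma}(\lambda)$, verify from the explicit form of the single-layer potentials $G_\gamma(\lambda)$ that $\psi=\RR_0(\lambda)f+\GG(\lambda)\qq$ lies in $H^1(\RE^2)\cap H^2(\RE^2\setminus\Pi)$ with $[\partial_{x_\gamma}G_\gamma(\lambda)q]_{\pi_\gamma}=-2m_\gamma q$ and no jump across the other hyperplanes, and then read the boundary condition $[\partial_{x_\gamma}\psi]_{\pi_\gamma}=2m_\gamma\alpha_\gamma\psi|_{\pi_\gamma}$ off the charge equation $q^{(\ell)}=-\alpha_\gamma\psi|_{\pi_\gamma}$ before concluding by self-adjointness. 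The only cosmetic difference is that the paper obtains the jump and regularity statements directly from the explicit Fourier formula for $G_\gamma(\lambda)q$ rather than from a distributional matching and elliptic regularity argument.
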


\begin{proof}First we remark that $\HH$ is a semi-bounded operator hence its resolvent $\RR(\lambda)$ is a bounded operator  for all  $\lambda>0$ such that $\inf\sigma(\HH) > -\la$. Let $f\in L^2(\RE^2).$ We want to show that the unique solution of
\begin{equation}\label{eq:res}
	(\HH+\lambda)\psi=f
\end{equation}
is given by
\begin{equation}\label{eq:psi}
	\psi=\RR_0(\lambda)f+\GG(\lambda)\qq%=\RR_0(\lambda)f+\sum_{\gamma}\RR_0(\la) (q^{(\ell)}\delta_{\gamma}).
%	\rr_0^{(\ell)}(\lambda+\hh_{0,\gamma})q^{(\ell)}
\end{equation}
where $\qq\in L^2(\pi_{23})\oplus L^2(\pi_{31})\oplus L^2(\pi_{12})$ is
\begin{equation}\label{eq:xi}
	\qq=-(\ID+\AA\MM(\lambda))^{-1}\AA \breve \GG(\lambda)f. 
\end{equation}
%%%%%\begin{equation}\label{eq:xi}
%%%%%	\qq=-(\ID+\AA\MM(\lambda))^{-1}\AA \RR_0(\lambda)f|_\Pi
%%%%%\end{equation}
%%%%%with $\RR_0(\lambda)f|_\Pi:=(\RR_0(\lambda)f|_{\pi_{23}},\RR_0(\lambda)f|_{\pi_{31}},\RR_0(\lambda)f|_{\pi_{12}}).$\\
First we show that $(\ID+\AA\MM(\lambda))$ is invertible. Let $q\in L^2(\pi_\gamma)$, recalling Eq. \eqref{night1} and by the unitarity of the Fourier transform, we get 
\[
	\left\|M_{\gamma\gamma}(\lambda)q\right\|^2_{L^2(\RE)}=\frac{m_\gamma}{2}\int_{\RE} dp_\ell\left|\frac{\hat{q}(p_\ell)}{\sqrt{\frac{p_\ell}{2\mu_\ell}+\lambda}}\right|^2\leq C\,\frac{\|q\|^2_{L^2(\RE)}}{\lambda}.
\]
%%%%%Eqs. \eqref{Mgs}, \eqref{Gg},\eqref{H0four}, we get 
%%%%%\[
%%%%%	\left\|M_{\gamma\gamma}(\lambda)q\right\|^2=\frac{1}{2\pi}\int_{\RE} dp_{\ell}\left|\int_{\RE} dk_{\gamma}\frac{\hat{q}(p_\ell)}{\frac{k_\gamma^2}{2m_\gamma}+\frac{p_\ell^2}{2\mu_\ell}+\lambda}\right|^2=\frac{1}{2\pi}\int_{\RE} dp_\ell\left|\frac{\hat{q}(p_\ell)}{\sqrt{\frac{p_\ell}{2\mu_\ell}+\lambda}}\right|^2\leq c\,\frac{\|q\|^2}{\lambda}
%%%%%\]
%%%%%where $\hat{q}$ denotes the Fourier transform of $q$.
On the other hand, if $q\in L^2(\pi_{\gamma'})$, by Eq. \eqref{night2} (see also Eq. \eqref{night3}), by the unitarity of the Fourier transform and by Cauchy-Schwarz inequality, we get 
\[
\left\|M_{\gamma\gamma'}(\lambda)q\right\|^2_{L^2(\RE)} \leq \frac{\|q\|^2_{L^2(\RE)}}{(2\pi)^2} \int_{\RE^2} dp_\ell\, dp_{\ell'} \frac{1}{\left|\frac{p_\ell^2}{2m_\gamma}+\frac{p_{\ell}\cdot p_{\ell'}}{2m_j}+\frac{p_{\ell'}^2}{2m_{\gamma'}}+\lambda\right|^2}\leq C\,\frac{\|q\|^2_{L^2(\RE)}}{\lambda},\quad \gamma\neq\gamma',
\]
$\ell'$ denoting the companion index of $\gamma'$ and $j\neq\ell,\ell'$. The latter inequality can be easily proved by scaling. We conclude that for $\lambda>0$ sufficiently large  one has $\|\AA\MM(\la)\|<1$,  hence $(\ID+\AA\MM(\la))$ is invertible.\\
It remains to show that $\psi$ defined by Eq. \eqref{eq:psi} is the solution of Eq. \eqref{eq:res}. First note that  $\psi\in H^1(\RE^2)\cap H^2(\RE^2\backslash\Pi)$ as a direct consequence of Eq. \eqref{eq:psi} and of Eq. \eqref{pigs}.  Moreover from Eq. \eqref{pigs} it is easy to convince oneself of the fact  that 
\[
	\bigl[\partial_{x_\gamma}G_\gamma(\lambda)q^{(\ell)}\bigr]_{\pi_\gamma}=-2m_\gamma q^{(\ell)}, \qquad\bigl[\partial_{x_\sigma}G_\gamma(\lambda)q^{(\ell)}\bigr]_{\pi_\sigma}=0 \quad\sigma\neq\gamma,
\]
 since  $[\partial_{x_{\gamma}}\RR_0(\lambda)f]_{\pi_{\gamma}}=0$,  one infers
\[
	[\partial_{x_\gamma}\psi]_{\pi_\gamma}=-2m_{\gamma}q^{(\ell)}.
\]
Using now Eqs. \eqref{eq:xi} and \eqref{eq:psi} one has
\begin{equation}\label{eq:xil}
	q^{(\ell)} = - \alpha_\gamma \breve G_\gamma(\lambda) f -\alpha_\gamma \sum_{\gamma' }M_{\gamma\gamma' }(\lambda)q^{(\ell')}=-\alpha_\gamma\psi|_{\pi_\gamma}.   
\end{equation}
Hence, $[\partial_{x_{\gamma}}\psi]_{\pi_\gamma}=2\alpha_\gamma m_\gamma \psi|_{\pi_\gamma}$ and $\psi$ belongs to $D(\HH).$ Moreover, by Eq. \eqref{pigs}, it follows that  $G_\gamma(\lambda)q^{(\ell)}$ satisfies
\[
	(\HH_0+\lambda)G_\gamma(\lambda)q^{(\ell)}= 0 \qquad\text{on $\RE^2\backslash\Pi$}.
\]
Recalling Eq. \eqref{HHalde2} the equation above implies
%%%%\[
%%%%	(\HH_0+\lambda)G_\gamma(\lambda)q^{(\ell)}=q^{(\ell)}\delta_\gamma 
%%%%\]
%%%%in the sense of distribution. Recalling Eq. \eqref{HHalde2} the equation above implies
\[
	(\HH+\lambda)\psi=f %\qquad\text{on $\RE^2\backslash\Pi$}
\]
which concludes the proof.
\end{proof}
%\begin{remark}
%	From \eqref{eq:RRalde} we deduce a decomposition of the elements in $D(\HH_{\alpha\delta}).$ Indeed for a fixed $z\in \CO\backslash \RE_+$ we can rewrite
%\[
%	D(\HH_{\alpha\delta})=\{\psi\in L^2(\RE^2)|\psi= \phi_z + \GG(z)\xixi, \,\phi_z\in H^2(\RE^2),\,q_\sigma\in H^{3/2}(\pi_\sigma), \,(\ID +\AA\MM(z))\xixi = - \AA \phi_z|_{\Gamma}\}
%\]
%\[
%	(\HH_{\alpha\delta}-z)\psi=(\HH_0-z) \phi_{z}.
%\]
%%	, can be univocally decomposed as 
%%\[\psi = \phi_z + \GG(z)\xixi\]
%%where $\xixi$ satisfies the identity 
%%\begin{equation}\label{eqq}
%%(\ID +\AA\MM(z))\xixi = - \AA \phi_z|_{\Gamma};\qquad \phi_z|_{\Gamma}:= (\phi_z|_{\pi_{23}},\phi_z|_{\pi_{31}},\phi_z|_{\pi_{12}})\in L^2(\pi_{23})\oplus L^2(\pi_{31})\oplus L^2(\pi_{12}).  
%%\end{equation}
%\end{remark}

In the following we explicitly write the equation for the charges $\qq$ in momentum space. The Fourier transform of  Eq. \eqref{eq:xil}, taking into account the formulae collected in Appendix \ref{app:B}, gives 
\begin{multline*}
		\left(1+\frac{\alpha_{23}\sqrt{2m_{23}}}{2\sqrt{\frac{p_1^2}{2\mu_1}+\lambda}}\right)\hat q^{(1)}(p_1)=-\frac{\alpha_{23}}{\sqrt{2\pi}}\int_{\RE} dp_2\,\frac{1}{\frac{p_2^2}{2m_{23}}+\frac{p_2\cdot p_1}{m_3}+\frac{p_1^2}{2m_{31}}+\lambda}f (p_2,p_1)\\
		-\frac{\alpha_{23}}{2\pi}\int_{\RE} dp_2\,\frac{1}{\frac{p_2^2}{2m_{23}}+\frac{p_2\cdot p_1}{m_3}+\frac{p_1^2}{2m_{31}}+\lambda}\hat q^{(2)}(p_2)	-\frac{\alpha_{23}}{2\pi}\int_{\RE} dp_3\,\frac{1}{\frac{p_3^2}{2m_{23}}+\frac{p_3\cdot p_1}{m_2}+\frac{p_1^2}{2m_{12}}+\lambda}\hat q^{(3)}(p_3).
\end{multline*}
Two similar equations are obtained by permutation of the indices. We note that with a slight abuse of notation we denoted by the same symbol  the function $f$ and its Fourier transform. 

Define $\xi^{(\ell)}(p_\ell)=\hat{q}^{(\ell)}(p_\ell)/\sqrt{2\pi}$ and set 
\begin{equation*}\label{eq:gamma}
 \tau_{\gamma}(\la):= \frac{1}{2\pi}\frac{\alpha_{\gamma}}{1+\alpha_{\gamma} \sqrt{\frac{m_{\gamma}}{2\la}}} \qquad \alpha_{\gamma} \in \RE.
\end{equation*}
Then $\xi^{(\ell)}(p_\ell)$ satisfy the system of equations  (in what follows we make explicit the dependence of $\xi^{(\ell)}$ on $\lambda$)
\begin{equation}\label{eq:stm}
	\begin{aligned}
	\xi^{(1)}(\la;p)=&- \int_{\RE} dq'\,\frac{\tau_{23}\bigl(\lambda+\frac{p^2}{2\mu_1}\bigr)}{\frac{{q'}^2}{2m_{23}}+\frac{q'\cdot p}{m_3}+\frac{p^2}{2m_{31}}+\lambda}f (q',p)-\int_{\RE} dq'\,\frac{\tau_{23}\bigl(\lambda+\frac{p^2}{2\mu_1}\bigr)}{\frac{{q'}^2}{2m_{23}}+\frac{q'\cdot p}{m_3}+\frac{p^2}{2m_{31}}+\lambda}\xi^{(2)}(\la;q')\\
		          &-\int_{\RE} dq'\,\frac{\tau_{23}\bigl(\lambda+\frac{p^2}{2\mu_1}\bigr)}{\frac{{q'}^2}{2m_{23}}+\frac{q'\cdot p}{m_2}+\frac{p^2}{2m_{12}}+\lambda}\xi^{(3)}(\la;q').
	\end{aligned}
\end{equation}
and two more equations obtained by permutation of indices.

We conclude this section with the proof of a bound on the $L^2$-norm of the functions $\xi^{(j)}$, $j = 1,2,3$, see Prop. \ref{p:xi-bound} below. 
\begin{remark} \label{r:tau}
For any $\alpha_{\gamma} \in \RE$, there exists $\tilde{\lambda}>0$ such that, for all $\lambda>\tilde{\lambda}$, one has that
\[|\tau_\gamma (\lambda)| \leq \frac{|\alpha_\gamma|}{\pi}. \]
To see that this is indeed the case: if $\alpha_\gamma  \geq 0$ one has $\tau_\gamma (\lambda)\leq \frac{\alpha}{2\pi}$ for all $\lambda>0$; if $\alpha<0$ take $\tilde{\lambda} = 2\alpha_\gamma^2m_\gamma$. 
\end{remark}
\begin{remark}\label{r:away2}
 We note that 
\[\frac{{q}^2}{2m_{23}} + \frac{q \cdot p}{m_3} + \frac{p^2}{2m_{31}} \geq \frac{q^2+p^2}{2 \max\{m_1,m_2\}}.  \]
So that, by  setting $C_{12} := 2\max\{m_1,m_3\}$, we get 
\[\frac{1}{\frac{{q}^2}{2m_{23}} + \frac{q \cdot p}{m_3} + \frac{p^2}{2m_{31}}+\la} \leq \frac{C_{12}}{q^2+p^2 +C_{12}\la }.  \]
In what follows we shall often use, without further warning,  the latter inequality (or similar ones obtained by permutation of the indices). Moreover, we shall use the identity 
\[\int_\RE \frac{1}{(s^2+\eta)^b}ds = \frac{C_b}{\eta^{b-\frac12}} \qquad \eta>0, \;b >1/2.\]
\end{remark}
\begin{proposition}\label{p:xi-bound}
For all $\la >0$ sufficiently large one has  
\begin{equation}\label{xi-bound}\|\xi^{(j)}(\la)\|_{L^2(\RE)} \leq C\, \|f\|_{L^2(\RE^2)} \qquad j=1,2,3.
\end{equation}
%for all $\lambda>\lambda_0$.
\end{proposition}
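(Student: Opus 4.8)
The plan is to treat the system \eqref{eq:stm} (together with its two permuted companions) as a fixed-point equation $\boldsymbol{\xi} = \boldsymbol{\eta} + \mathbf{L}(\lambda)\boldsymbol{\xi}$ in $L^2(\pi_{23})\oplus L^2(\pi_{31})\oplus L^2(\pi_{12})$, where $\boldsymbol{\eta}$ is the inhomogeneous term (the three integrals against $f$) and $\mathbf{L}(\lambda)$ is the $3\times3$ operator matrix whose off-diagonal blocks have kernels of the form $\tau_{23}(\lambda+p^2/(2\mu_1))\,\bigl(\tfrac{q'^2}{2m_{23}}+\tfrac{q'\cdot p}{m_3}+\tfrac{p^2}{2m_{31}}+\lambda\bigr)^{-1}$. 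I would first show $\|\mathbf{L}(\lambda)\|_{\Bnorm}<1/2$ (say) for $\lambda$ large, and then bound $\|\boldsymbol{\eta}\|\leq C\|f\|_{L^2(\RE^2)}$; the bound \eqref{xi-bound} follows from the Neumann series, since $\boldsymbol{\xi}=(\ID-\mathbf{L}(\lambda))^{-1}\boldsymbol{\eta}$ and $\|\boldsymbol{\xi}\|\leq 2\|\boldsymbol{\eta}\|$.

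For the operator norm of the off-diagonal blocks I would use a Schur test. Fix the block with kernel $K(p,q')=\tau_{23}(\lambda+p^2/(2\mu_1))\,\bigl(\tfrac{q'^2}{2m_{23}}+\tfrac{q'\cdot p}{m_3}+\tfrac{p^2}{2m_{31}}+\lambda\bigr)^{-1}$. By Remark \ref{r:tau} one has $|\tau_{23}(\lambda+p^2/(2\mu_1))|\leq|\alpha_{23}|/\pi$ uniformly in $p$ once $\lambda$ is large, and by Remark \ref{r:away2} the remaining factor is dominated by $C_{12}\bigl(q'^2+p^2+C_{12}\lambda\bigr)^{-1}$. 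Hence
\[
\int_\RE |K(p,q')|\,dq' \leq \frac{|\alpha_{23}|}{\pi}\,C_{12}\int_\RE \frac{dq'}{q'^2+(p^2+C_{12}\lambda)} = \frac{|\alpha_{23}|\,C_{12}}{\pi}\cdot\frac{C_{1/1}}{\sqrt{p^2+C_{12}\lambda}} \leq \frac{C}{\sqrt{\lambda}},
\]
using the integral identity from Remark \ref{r:away2} with $b=1$; by symmetry of the bound the same estimate holds for $\sup_{q'}\int_\RE|K(p,q')|\,dp$ (the $\tau$-factor depends only on $p$, so after bounding it by a constant the kernel is symmetric up to harmless constants). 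The Schur test then gives the block norm $\leq C/\sqrt{\lambda}$, and summing the at most six relevant blocks yields $\|\mathbf{L}(\lambda)\|_{\Bnorm}\leq C/\sqrt{\lambda}<1/2$ for $\lambda$ large. The diagonal blocks of $\ID-\mathbf{L}$ are just $\ID$, since \eqref{eq:stm} already has the $\xi^{(1)}$ term isolated on the left-hand side, so no extra work is needed there.

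It remains to bound the inhomogeneous term $\eta^{(1)}(\lambda;p) = -\int_\RE dq'\,\tau_{23}(\lambda+p^2/(2\mu_1))\bigl(\tfrac{q'^2}{2m_{23}}+\tfrac{q'\cdot p}{m_3}+\tfrac{p^2}{2m_{31}}+\lambda\bigr)^{-1}f(q',p)$ and its two analogues. Here I would use Cauchy--Schwarz in the $q'$ variable: with $|\tau_{23}|\leq|\alpha_{23}|/\pi$ and Remark \ref{r:away2},
\[
|\eta^{(1)}(\lambda;p)|^2 \leq \frac{\alpha_{23}^2}{\pi^2}\,C_{12}^2\left(\int_\RE \frac{dq'}{(q'^2+p^2+C_{12}\lambda)^2}\right)\int_\RE |f(q',p)|^2\,dq' \leq \frac{C}{(p^2+C_{12}\lambda)^{3/2}}\int_\RE|f(q',p)|^2\,dq',
\]
again by the integral identity with $b=2$. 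Integrating in $p$ gives $\|\eta^{(1)}(\lambda)\|_{L^2(\RE)}^2 \leq C\lambda^{-3/2}\|f\|_{L^2(\RE^2)}^2 \leq C\|f\|_{L^2(\RE^2)}^2$ for $\lambda$ large, and likewise for $\eta^{(2)},\eta^{(3)}$. Combining with the Neumann-series bound from the previous step proves \eqref{xi-bound}.

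The only mildly delicate point is making sure the Schur test applies uniformly across all three permuted systems — i.e. that the constants $C_{\sigma}$, $\max\{m_i,m_j\}$, and $|\alpha_\sigma|/\pi$ coming from the different pairings are all finite and $\lambda$-independent, which is immediate — and checking that the $\tau$-factor's $p$-dependence does not spoil the symmetry needed for the $\sup_{q'}\int dp$ half of the Schur test; since $\tau$ is bounded it can simply be pulled out as a constant before testing, so this causes no real trouble. Everything else is the routine scaling computations already flagged in Remark \ref{r:away2}.
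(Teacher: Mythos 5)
Your proposal is correct. The individual estimates you use are exactly the ones the paper uses (the uniform bound on $\tau_\gamma$ from Remark \ref{r:tau}, the quadratic-form lower bound and the scaling identity from Remark \ref{r:away2}, and Cauchy--Schwarz in $q'$ for the inhomogeneous term, which indeed yields $\|\eta^{(j)}\|_{L^2}^2\leq C\lambda^{-3/2}\|f\|_{L^2}^2$). Where you differ is in how the coupling among $\xi^{(1)},\xi^{(2)},\xi^{(3)}$ is closed: the paper applies Cauchy--Schwarz to the $\xi^{(k)}$-terms as well, obtaining three coupled scalar inequalities of the form $\|\xi^{(1)}\|^2\leq C\lambda^{-3/2}\|f\|^2+C\lambda^{-1}(\|\xi^{(2)}\|^2+\|\xi^{(3)}\|^2)$, and then sums and absorbs for $\lambda$ large; you instead bound the operator norm of the $3\times 3$ coupling matrix by a Schur test (getting $C/\sqrt{\lambda}$ per block, slightly weaker than the paper's $C/\sqrt{\lambda}$ per block after Cauchy--Schwarz, i.e.\ $C/\lambda$ at the level of squared norms, but amply sufficient) and invert by a Neumann series. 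Your route has the minor advantage of not presupposing that the $\xi^{(j)}$ are a priori in $L^2$ --- the absorption step in the paper is only meaningful because the preceding theorem already established invertibility of $\ID+\AA\MM(\lambda)$, whereas your Neumann series re-derives that invertibility and produces the unique $L^2$ solution together with its bound in one stroke. The paper's route is marginally shorter because it never needs to discuss the row/column sups of the kernels separately. Both are sound; your observation that the $p$-dependent factor $\tau_{23}(\lambda+p^2/(2\mu_1))$ can be pulled out as a constant before testing is the right way to dispose of the only potential obstruction to the Schur test.
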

\begin{proof}
From Eq. \eqref{eq:stm}, we have that 
\begin{equation*}
	\begin{aligned}
\int_{\RE} dp \, |\xi^{(1)}(\la;p)|^2 \leq  & C \bigg[\int_{\RE} dp  \left(\int_{\RE} dq'\,\frac{1}{{q'}^2+p^2+C_{12}\lambda}|f(q',p) |\right)^2\\ 
	+\int_{\RE} dp  \bigg(\int_{\RE} dq'\, & \frac{1}{{q'}^2+p^2+C_{12}\lambda}|\xi^{(2)}(\la;q')| \bigg)^2+\int_{\RE} dp \left(\int_{\RE} dq'\,\frac{1}{{q'}^2+p^2+C_{31}\lambda}|\xi^{(3)}(\la;q')|\right)^2 \bigg]\\
	\leq & C\bigg[ \int_{\RE} dp \, \frac{1}{(p^2+C_{12}\lambda)^{\frac32}}  \int_{\RE} dq'\,|f (q',p)|^2  \\ 
	+ \int_{\RE} dp \, & \frac{1}{(p^2+C_{12}\lambda)^{\frac32}}  \int_{\RE} dq'\,|\xi^{(2)}(\la;q')|^2+\int_{\RE} dp  \frac{1}{(p^2+C_{31}\lambda)^{\frac32}}  \int_{\RE} dq'\, |\xi^{(3)}(\la;q')|^2\bigg].
	\end{aligned}
\end{equation*}
Here, in the first inequality, we took into account Rem. \ref{r:tau} and Rem. \ref{r:away2}, in the second inequality we used Cauchy-Schwarz inequality and again Rem. \ref{r:away2}. 
Hence
\begin{equation*}
\|\xi^{(1)}(\la)\|^2_{L^2(\RE)} \leq  \frac{C}{\lambda^{\frac32}}\|f\|^2_{L^2(\RE^2)} +  \frac{C}{\lambda} \left(\|\xi^{(2)}(\la)\|^2_{L^2(\RE)} + \|\xi^{(3)}(\la)\|^2_{L^2(\RE)}\right). 
\end{equation*}
Similar inequalities are obtained by permutation of the indices, i.e., 
 \begin{equation*}
\|\xi^{(2)}(\la)\|^2_{L^2(\RE)} \leq  \frac{C}{\lambda^{\frac32}}\|f\|^2_{L^2(\RE^2)} +  \frac{C}{\lambda} \left(\|\xi^{(1)}(\la)\|^2_{L^2(\RE)} + \|\xi^{(3)}(\la)\|^2_{L^2(\RE)}\right),
\end{equation*}
\begin{equation*}
\|\xi^{(3)}(\la)\|^2_{L^2(\RE)} \leq  \frac{C}{\lambda^{\frac32}}\|f\|^2_{L^2(\RE^2)} +  \frac{C}{\lambda} \left(\|\xi^{(1)}(\la)\|^2_{L^2(\RE)} + \|\xi^{(2)}(\la)\|^2_{L^2(\RE)}\right). 
\end{equation*}
Summing up all the inequalities we obtain 
\[
\left(1-\frac{C}{\la}\right) \sum_{j=1}^3 \|\xi^{(j)}(\la)\|^2_{L^2(\RE)} \leq  \frac{C}{\lambda^{\frac32}}\|f\|^2_{L^2(\RE^2)} .
\]
For $\lambda$ large enough the latter inequality implies $ \sum_{j=1}^3 \|\xi^{(j)}(\la)\|^2_{L^2(\RE)} \leq  C\|f\|^2_{L^2(\RE^2)} $, which in turn implies Bound  \eqref{xi-bound}.
\end{proof}

\section{Proof of Main Theorem}

In this section we prove Theorem \ref{t:main}. As a preliminary result we prove an a priori estimate on $\t_\gamma^\ve(\lambda)$ and a bound on $\t_\gamma^\ve(\lambda) - \tau_\gamma (\lambda)$. 
\begin{lemma}\label{l:tvet}
Assume that $ v_\ga \in L^1(\RE,(1+|x|)^b dx)$ for some $0<b < 1$ and for all $\gamma = 23,31,12$. Moreover set $\alpha_\gamma = \int_{\RE}  v_\gamma \, dx$. Then  for all $\la>0$ sufficiently large one has
  \begin{equation}\label{tve-bound}
\sup_{k,k'\in\RE}|\t^\ve_{\gamma}(\lambda;k,k')|\leq   C,
\end{equation}
and 
\begin{equation}\label{tvet}
\sup_{k,k'\in\RE}\frac{|\t_\gamma^\ve(\lambda;k,k') - \tau_\gamma (\lambda)|}{|k|^{\bb}+{|k'|}^{\bb}+1} \leq   C \, \ve^{\bb},
\end{equation}
where $0<\ve<1$. 
\end{lemma}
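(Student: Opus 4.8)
The plan is to analyse the Lippmann--Schwinger-type equation \eqref{t} for $t_\gamma^\ve(\lambda)$ directly, treating it as a fixed-point equation in a weighted sup-norm. Introduce the weight $w_b(k) := 1 + |k|^b$ and the Banach space of continuous kernels $h(k,k')$ with $\|h\| := \sup_{k,k'} |h(k,k')|/(w_b(k) + w_b(k'))$ finite (plus the ordinary sup-norm when $b=0$-type bounds are needed). The first step is to record the elementary estimates on $\hat v_\gamma$: since $v_\gamma \in L^1(\RE,(1+|x|)^b\,dx)$, the function $\hat v_\gamma$ is bounded, continuous, and H\"older-$b$ continuous, so that $|\hat v_\gamma(\ve(k-k')) - \hat v_\gamma(0)| \le C\,\ve^b(|k|^b + |k'|^b)$ and $|\hat v_\gamma(0)| = \|v_\gamma\|_{L^1}/\sqrt{2\pi}$ is (up to the $1/\sqrt{2\pi}$) exactly $\alpha_\gamma/\sqrt{2\pi}$; this last point is what pins down the limiting constant.

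Next I would set up the operator $K^\ve_\lambda$ on kernels defined by $(K^\ve_\lambda h)(k,k') := -\frac{1}{\sqrt{2\pi}}\int_\RE \hat v_\gamma(\ve(k-q))\,\frac{1}{q^2/(2m_\gamma)+\lambda}\,h(q,k')\,dq$, so that \eqref{t} reads $t^\ve_\gamma = \frac{1}{\sqrt{2\pi}}\hat v_\gamma(\ve(\cdot-\cdot)) + K^\ve_\lambda t^\ve_\gamma$. The key quantitative input is that $\|K^\ve_\lambda\| \le C\|v_\gamma\|_{L^1}\int_\RE \frac{dq}{q^2/(2m_\gamma)+\lambda} = C\|v_\gamma\|_{L^1}/\sqrt{\lambda}$ on the plain sup-norm, and a similar bound holds on the weighted norm because the weight in the $k'$ variable is untouched and the convolution against the integrable kernel $\hat v_\gamma$ only improves the $q$-dependence. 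Choosing $\lambda$ large enough that $\|K^\ve_\lambda\| \le 1/2$ uniformly in $\ve \in (0,1)$ gives existence, uniqueness, and the a priori bound $\sup_{k,k'}|t^\ve_\gamma(\lambda;k,k')| \le 2\cdot\frac{1}{\sqrt{2\pi}}\|\hat v_\gamma\|_\infty \le C$, which is \eqref{tve-bound}. The same Neumann-series reasoning in the weighted norm gives $\|t^\ve_\gamma\| \le C$ as well, with constants uniform in $\ve$.

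For \eqref{tvet} the idea is to compare with the limiting equation. Observe that $\tau_\gamma(\lambda)$ is the (constant-in-$k,k'$) solution of the analogous equation with $\hat v_\gamma(\ve(k-q))$ replaced by $\hat v_\gamma(0) = \alpha_\gamma/\sqrt{2\pi}$: indeed $\tau_\gamma(\lambda) = \frac{1}{2\pi}\frac{\alpha_\gamma}{1+\alpha_\gamma\sqrt{m_\gamma/(2\lambda)}}$ satisfies $\tau_\gamma = \frac{\alpha_\gamma}{2\pi} - \frac{\alpha_\gamma}{2\pi}\big(\int_\RE \frac{dq}{q^2/(2m_\gamma)+\lambda}\big)\tau_\gamma$ once one uses $\int_\RE \frac{dq}{q^2/(2m_\gamma)+\lambda} = \pi\sqrt{2m_\gamma/\lambda}$ (the explicit identity from Remark~\ref{r:away2}). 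Then the difference $D^\ve := t^\ve_\gamma(\lambda;\cdot,\cdot) - \tau_\gamma(\lambda)$ satisfies $D^\ve = g^\ve + K^\ve_\lambda D^\ve$, where the forcing term $g^\ve(k,k') = \frac{1}{\sqrt{2\pi}}\big(\hat v_\gamma(\ve(k-k')) - \hat v_\gamma(0)\big) - \frac{1}{\sqrt{2\pi}}\int_\RE\big(\hat v_\gamma(\ve(k-q)) - \hat v_\gamma(0)\big)\frac{dq}{q^2/(2m_\gamma)+\lambda}\,\tau_\gamma(\lambda)$ collects exactly the replacement errors; by the H\"older estimate on $\hat v_\gamma$ and the fact that $\int_\RE |k-q|^b\frac{dq}{q^2/(2m_\gamma)+\lambda} \le C(|k|^b+1)$, one gets $\|g^\ve\| \le C\ve^b$ in the weighted norm. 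Since $\|K^\ve_\lambda\|\le 1/2$, the Neumann series gives $\|D^\ve\| \le 2\|g^\ve\| \le C\ve^b$, which is precisely \eqref{tvet}.

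The main obstacle is bookkeeping the weighted norm through the convolution operator $K^\ve_\lambda$: one must check that convolving the H\"older-type bound $|k-q|^b$ against the integrable factor $1/(q^2/(2m_\gamma)+\lambda)$ returns something controlled by $C(|k|^b+1)$ with a constant uniform in $\lambda \ge 1$ (this uses $|k-q|^b \le |k|^b + |q|^b$ and $\int |q|^b/(q^2+\lambda)\,dq$ being finite for $b<1$, which is exactly why the hypothesis requires $b<1$), and that the operator norm of $K^\ve_\lambda$ on the weighted space is still $O(\lambda^{-1/2})$ rather than being spoiled by the weight. Once that uniformity is in place, everything else is the standard contraction-mapping argument, and the extraction of $\alpha_\gamma$ as the limiting strength is immediate from $\hat v_\gamma(0) = (2\pi)^{-1/2}\int_\RE v_\gamma\,dx$.
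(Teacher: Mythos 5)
Your proposal is correct and takes essentially the same route as the paper: both subtract the constant-coefficient identity satisfied by $\tau_\gamma(\lambda)$ from the Lippmann--Schwinger equation \eqref{t}, bound the resulting forcing term by $C\ve^{b}$ in the weighted sup-norm using the H\"older-$b$ modulus of $\hat v_\gamma$ at the origin, and absorb the remaining integral term for $\lambda$ large (your Neumann-series phrasing is just the paper's ``take the sup and absorb'' step repackaged as a contraction). The one quantitative slip is that the weighted operator norm of $K^\ve_\lambda$ is $O(\lambda^{-(1-b)/2})$ rather than $O(\lambda^{-1/2})$, since the factor $|q|^{b}$ in the weight costs $\lambda^{b/2}$ in the integral $\int |q|^{b}\,(q^2/(2m_\gamma)+\lambda)^{-1}dq$; because $b<1$ this still tends to zero as $\lambda\to\infty$, so the argument is unaffected.
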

\begin{remark} Obviously the assumption on the potentials $v_\gamma$ is satisfied whenever $ v_\ga \in L^1(\RE,(1+|x|)^s dx)$ for some $s>0$. We note that  the speed of converges in  Bound \eqref{tvet} improves only up to $s=1$, in particular it does not exceed  $\ve^{\delta}$, $\delta < \min\{1,s\}$.
\end{remark}
\begin{proof}[Proof of Lemma \ref{l:tvet}] Denoting by $\hat v_\gamma(k)$ the Fourier transform of $v_{\gamma}(x)$, we note that, since $v_\ga \in L^1(\RE)$, 
\begin{equation}\label{nothing1}
\sup_{k\in\RE}  | \hat v_\gamma(k)|  \leq C.
\end{equation}
From Eqs. \eqref{t} and \eqref{nothing1} we  infer 
\[
|\t_\gamma^\ve(\lambda;k,k')| \leq C + \frac{C}{\sqrt \lambda }\sup_{p\in\RE} |\t_\gamma^\ve(\lambda;p,k')|.
\]
By taking the supremum over $k$ and $k'$, and up to choosing $\lambda$ large enough, the latter bound implies the a priori estimate \eqref{tve-bound}. \\ 

To prove Bound \eqref{tvet}, we start by noting that, since  $v_\ga\in L^1(\RE,(1+|x|)^b dx)$, we have that
\begin{equation}\label{nothing2}
 |\hat v_\gamma(k)-\hat v_\gamma (0)|= \left| \frac{1}{\sqrt{2\pi}} \int_{\RE} (e^{ikx}-1) v_\gamma(x) dx  \right|  \leq |k|^{\bb} \frac{1}{\sqrt{2\pi}} \int_{\RE} |x|^{\bb}|v_\ga(x)| dx \leq C |k|^{\bb}. 
\end{equation}
%The integral kernel of the operator $\tt_\gamma^\ve(\lambda)$ satisfies
%\[\tt_\gamma^\ve(\lambda) = \vv_\gamma^\ve - \vv_\gamma^\ve \rr_{0,\gamma}(\lambda) \tt_\gamma^\ve.\]
%Hence its integral kernel satisfies 
%the equation 
%\begin{equation}\label{harlem1}
%\t_\gamma^\ve(\lambda;k,k') = \frac{\hat v_\gamma (\ve(k-k'))}{\sqrt{2\pi}} - \frac{1}{\sqrt{2\pi}}\int_{\RE} \frac{\hat v_\gamma(\ve(k-p))}{p^2+\lambda}\t_\gamma^\ve(\lambda;p,k') dp .
%\end{equation}
 Moreover,  $\alpha_\gamma = \sqrt{2\pi} \hat v_\gamma(0)$ and the function $\tau_\gamma (\lambda)$ satisfies the identity 
\begin{equation}\label{harlem2}
\tau_\gamma(\lambda) = \frac{\hat v_\gamma(0)}{\sqrt{2\pi}} - \frac{1}{\sqrt{2\pi}}\int_{\RE} \frac{\hat v_\gamma(0)}{p^2/(2m_\gamma)+\lambda}\tau_\gamma(\lambda) dp .
\end{equation}
By taking the difference of Eqs. \eqref{t} and \eqref{harlem2} one has 
\begin{equation}\label{hawaii}
\begin{aligned}
\t_\gamma^\ve(\lambda;k,k') - \tau_\gamma(\lambda) =&  \frac{\hat v_\gamma(\ve(k-k'))-\hat v_\gamma(0)}{\sqrt{2\pi}} - \frac{1}{\sqrt{2\pi}}\int_{\RE} \frac{\hat v_\gamma(\ve(k-p))-\hat v_\gamma(0)}{p^2/(2m_\gamma)+\lambda}\tau_\gamma(\lambda) dp \\ 
& - \frac{1}{\sqrt{2\pi}}\int_{\RE} \frac{\hat v_\gamma(\ve(k-p))}{p^2/(2m_\gamma)+\lambda}(\t_\gamma^\ve(\lambda;p,k')-\tau_\gamma(\lambda)) dp.
\end{aligned}
\end{equation}
%We shall prove that 
%\begin{equation}\label{spicchio}
%\sup_{k,k'\in\RE}\frac{|\t_\gamma^\ve(\lambda;k,k') - \tau_\gamma(\lambda)|}{|k|^{\bb}+|k'|^{\bb}+1} \leq   C\ve^{\bb},
%\end{equation}
%which is equivalent to Bound \eqref{tvet}. 
By using the fact that $\hat{v}_\gamma$ is bounded and  Bound \eqref{nothing2} in Eq. \eqref{hawaii}, we have that  
\begin{equation*}%\label{brick}
\begin{aligned}
\frac{|\t_\gamma^\ve(\lambda;k,k') - \tau_\gamma(\lambda)|}{|k|^{\bb}+|k'|^{\bb}+1} \leq  & C\ve^{\bb} + \frac{C \ve^{\bb} |\tau_\gamma(\lambda)|}{|k|^{\bb}+|k'|^{\bb}+1} \int_{\RE} \frac{|k|^{\bb}+|p|^\bb}{p^2/(2m_\gamma)+\lambda} dp\\ 
&+\frac{C}{|k|^{\bb}+|k'|^{\bb}+1} \int_{\RE} \frac{|p|^{\bb}+|k'|^{\bb}+1}{p^2/(2m_\gamma)+\lambda}\frac{|\t_\gamma^\ve(\lambda;p,k')-\tau_\gamma(\lambda)|}{|p|^{\bb}+|k'|^{\bb}+1} dp.
\end{aligned}
\end{equation*}
We note that 
\[
\frac{1}{|k|^{\bb}+|k'|^{\bb}+1} \int_{\RE} \frac{|p|^{\bb}+|k|^{\bb}+|k'|^{\bb}+1}{p^2/(2m_\gamma)+\lambda}dp  \leq \frac{C}{\la^\frac12} + \frac{C}{\la^{\frac12-\frac{b}2}}. 
\]
Hence, for $\lambda > 1$, one has  
\[
\sup_{k,k'\in\RE}\frac{|\t_\gamma^\ve(\lambda;k,k') - \tau_\gamma(\lambda)|}{|k|^{\bb}+|k'|^{\bb}+1} \leq   C\ve^{\bb} \left(1+\frac{|\tau_\gamma (\lambda)|}{\lambda^{\frac12-\frac{b}2}}\right) +  
\frac{C}{\lambda^{\frac12-\frac{\bb}2}} \sup_{p,k'\in\RE}\frac{|\t_\gamma^\ve(\lambda;p,k')-\tau_\gamma(\lambda)|}{|p|^{\bb }+|k'|^{\bb}+1}.
\]
The latter bound implies Bound \eqref{tvet}, by  Rem. \ref{r:tau} and  up to taking $\lambda$ large enough.
\end{proof}
\begin{proof}[{\bf Proof of Theorem \ref{t:main}}] 
In the proof of the theorem we set $b=s$ if $0<s<1$, if $s\geq 1$ one can chooses any $b \in(0,1)$. 

We fix $\la_0$ such that 
\[\min\{\inf \sigma(\HH^\ve),\inf \sigma(\HH)\} > -\lambda_0 \qquad  \forall \ve>0,   \]
and we prove that, for $\lambda>\la_0 $ large enough, one has 
\begin{equation*}%\label{hurricane}
\lim_{\ve\to 0} \|(\HH^\ve  + \lambda)^{-1}- (\HH +\lambda)^{-1}\|_{\Bnorm(L^2(\RE^2))} = 0,
\end{equation*}
where $ \|\cdot\|_{\Bnorm(L^2(\RE^2))} $ denotes the usual norm for bounded operators in $L^2(\RE^2)$. 
The convergence of $(\HH^\ve  - z)^{-1}$ to $(\HH  - z)^{-1}$ for any $z\in\CO\backslash\RE$  follows from the identity 
\[(\HH^\ve  - z)^{-1} - (\HH  - z)^{-1} = \frac{\HH^\ve  +\lambda}{\HH^\ve  - z} \left[ (\HH^\ve  +\la)^{-1} - (\HH  +\la)^{-1}\right]\frac{\HH +\la }{\HH  - z},\]
see for example \cite[Lem. 2.6.1]{davies96}. Since they are unitarily equivalent we can estimate the norm of $\hat \RR^\ve(\la) - \hat \RR(\la)$ where $ \hat \RR(\la)$ is the conjugate
of  $\RR(\la)$ through Fourier transform.
Taking into account Eqs.  \eqref{horse} and \eqref{eq:psi} we have that 
\[
(\hat \RR^\ve(\la) - \hat \RR(\la)) f = \hat \RR_0(\la)\sum_{m} \rho^{(m),\ve}(\la)   -  \hat\GG(\lambda)\hat\qq(\la),
\]
where $\hat\GG(\lambda) =(\hat G_{23}(\lambda),\hat G_{31}(\lambda),\hat G_{12}(\lambda))$, and $\hat\qq=(\hat q^{(1)},\hat q^{(2)},\hat q^{(3)})$. 
Taking into account the explicit form of the resolvent  $\hat \RR_0(\la)$ in the $p$-coordinates, see, e.g., Eq. \eqref{change4}, and by Eq.  \eqref{file} (together with the definition of $\xi^{(\ell)}(\la)$) we have
%Since $\RR_0(\la) = (\HH_0-\la)^{-1}$, recalling that in the $p$ - coordinates  the Hamiltonian  $\HH_0$ has the expression \eqref{H0qp},  that  $\GG(\lambda)\qq(\la)$ can be expressed componentwise as in Eq. \eqref{stuck}, and that in Fourier transform each component is given by \eqref{middle}, we conclude that 
\[\begin{aligned}
& (\hat \RR^\ve(\la) - \hat \RR(\la)) f(p_2,p_1) = \\
= & 
\frac{\rho^{(1),\ve}(\la;p_2,p_1) - \xi^{(1)}(\la;p_1)}{\frac{p_2^2}{2m_{23}} + \frac{p_2 \cdot p_1}{m_3} + \frac{p_1^2}{2m_{13}}+\la} 
+
\frac{\rho^{(2),\ve}(\la;p_3,p_2) - \xi^{(2)}(\la;p_2)}{\frac{p_3^2}{2m_{31}} + \frac{p_3 \cdot p_2}{m_1} + \frac{p_2^2}{2m_{21}}+\la} 
+\frac{\rho^{(3),\ve}(\la;p_1,p_3) - \xi^{(3)}(\la;p_3)}{\frac{p_1^2}{2m_{12}} + \frac{p_1 \cdot p_3}{m_2} + \frac{p_3^2}{2m_{32}}+\la} 
\end{aligned}
\]
where $p_3 = -p_1-p_2$. Hence,
\begin{multline}
\label{gl}
\| (\hat \RR^\ve(\la) - \hat \RR(\la)) f \|^2_{L^2(\RE^2)} 
 \leqslant C 
\int_{\RE^2} dq\,dp  \left[\frac{\left|\rho^{(1),\ve}(\la;q,p) - \xi^{(1)}(\la;p)\right|^2}{\left|\frac{q^2}{2m_{23}} + \frac{q \cdot p}{m_3} + \frac{p^2}{2m_{13}}+\la\right|^2}  \right. \\
 \left.+
 \frac{\left|\rho^{(2),\ve}(\la;q,p) - \xi^{(2)}(\la;p)\right|^2}{\left|\frac{q^2}{2m_{31}} + \frac{q \cdot p}{m_1} + \frac{p^2}{2m_{21}}+\la\right|^2} 
+
 \frac{\left|\rho^{(3),\ve}(\la;q,p) - \xi^{(3)}(\la;p)\right|^2}{\left|\frac{q^2}{2m_{12}} + \frac{q \cdot p}{m_2} + \frac{p^2}{2m_{32}}+\la\right|^2}\right].
\end{multline}
%\begin{equation}\label{gl}
%\begin{aligned}
%& \| (\hat \RR^\ve(\la) - \hat \RR(\la)) f \|^2_{L^2(\RE^2)}\\ 
%\leqslant C& 
%\int_{\RE^2} dq\,dp \left[\frac{\left|\rho^{(1),\ve}(\la;q,p) - \xi^{(1)}(\la;p)\right|^2}{\left|\frac{q^2}{2m_{23}} + \frac{q \cdot p}{m_3} + \frac{p^2}{2m_{13}}+\la\right|^2} 
%+
% \frac{\left|\rho^{(2),\ve}(\la;q,p) - \xi^{(2)}(\la;p)\right|^2}{\left|\frac{q^2}{2m_{31}} + \frac{q \cdot p}{m_1} + \frac{p^2}{2m_{21}}+\la\right|^2} 
%+
% \frac{\left|\rho^{(3),\ve}(\la;q,p) - \xi^{(3)}(\la;p)\right|^2}{\left|\frac{q^2}{2m_{12}} + \frac{q \cdot p}{m_2} + \frac{p^2}{2m_{32}}+\la\right|^2}\right]
%\end{aligned}
%\end{equation}
We note the chain of  inequalities  
\begin{align}
\int_{\RE^2} dq\,dp\, \frac{\left|\rho^{(1),\ve}(\la;q,p) - \xi^{(1)}(\la;p)\right|^2}{\left|\frac{q^2}{2m_{23}} + \frac{q \cdot p}{m_3} + \frac{p^2}{2m_{13}}+\la\right|^2}  
\leq & C_{12}^2 \int_{\RE} dq\, \frac{1}{(q^2+C_{12}\la)^{2-\bb}} \int_{\RE} dp\,\frac{ \left|\rho^{(1),\ve}(\la;q,p) - \xi^{(1)}(\la;p)\right|^2}{(q^2+p^2+C_{12}\la)^{\bb}}  \nonumber \\ 
\leq & \frac{C}{\la^{\frac32-\bb}} \sup_{q\in\RE}\int_{\RE} dp\,\frac{\left|\rho^{(1),\ve}(\la;q,p) - \xi^{(1)}(\la;p)\right|^2 }{(q^2+p^2+C_{123}\la)^{\bb}}  , \label{atom}
\end{align}
for any $0<b<3/2$. Here we used Rem. \ref{r:away2} and the trivial inequality 
\[\frac{1}{(q^2+p^2+C_{12}\la)^{2}}\leq \frac{1}{(q^2+C_{12}\la)^{2-\bb}}\frac{1}{(q^2+p^2+C_{12}\la)^{\bb}},\] \\
and defined $C_{123} $ to be the least of $C_{12}$, $C_{23}$ and $C_{31}$.
Two analogous inequalities hold true for the terms involving   $\rho^{(2),\ve} - \xi^{(2)}$ and $\rho^{(3),\ve} - \xi^{(3)}$. 
% are obtained by cyclic permutations of the indices. 

Hence it is sufficient to prove
\[
\lim_{\ve\to 0}
\sup_{q\in\RE}\int_{\RE} dp\,\frac{\left|\rho^{(\ell),\ve}(\la;q,p) - \xi^{(\ell)}(\la;p)\right|^2 }{(q^2+p^2+C_{123}\la)^{\bb}} =0 \qquad \ell = 1,2,3. 
\]
Using Eqs. \eqref{rho1eq} and \eqref{eq:stm} we have
\begin{equation*}
\begin{aligned}
&\left|\rho^{(1),\ve}(\la;q,p) - \xi^{(1)}(\la;p)\right|^2   \\
 \leqslant &C \left[ \lf|
 \int_{\RE} dq' \, \frac{t_{23}^\ve(\lambda+\frac{p^2}{2\mu_1};-q-\frac{m_2}{m_2+m_3}p,-q'-\frac{m_2}{m_2+m_3}p)-\tau_{23}\bigl(\lambda+\frac{p^2}{2\mu_1}\bigr)}{\frac{{q'}^2}{2m_{23}}+\frac{q'\cdot p}{m_3}+\frac{p^2}{2m_{31}}+\lambda}f(q',p) \ri|^2 \right.\\
&+\lf|
\int_{\RE} dq'\,\frac{t_{23}^\ve(\lambda+\frac{p^2}{2\mu_1};-q-\frac{m_2}{m_2+m_3}p,-q'-\frac{m_2}{m_2+m_3}p)- \tau_{23}\bigl(\lambda+\frac{p^2}{2\mu_1}\bigr)}{\frac{{q'}^2}{2m_{23}}+\frac{q'\cdot p}{m_3}+\frac{p^2}{2m_{31}}+\lambda}\xi^{(2)}(\la;q')  \ri|^2  \\
&+\lf|
 \int_{\RE} dq'\,\frac{t_{23}^\ve(\lambda+\frac{p^2}{2\mu_1};-q-\frac{m_2}{m_2+m_3}p,q'+\frac{m_3}{m_2+m_3}p)-\tau_{23}\bigl(\lambda+\frac{p^2}{2\mu_1}\bigr)}{\frac{{q'}^2}{2m_{23}}+\frac{q'\cdot p}{m_2}+\frac{p^2}{2m_{12}}+\lambda}\xi^{(3)}(\la;q') \ri|^2  \\
&+ \lf|
 \int_{\RE} dq' \, \frac{t_{23}^\ve(\lambda+\frac{p^2}{2\mu_1};-q-\frac{m_2}{m_2+m_3}p,-q'-\frac{m_2}{m_2+m_3}p)}{\frac{{q'}^2}{2m_{23}} + \frac{q' \cdot p}{m_3} + \frac{p^2}{2m_{31}}+\la} (\rho^{(2),\ve}(\la;-p-q',q') - \xi^{(2)}(\la;q'))  \ri|^2\\
&\left. +  \lf|
    \int_{\RE} dq' \, \frac{t_{23}^\ve(\lambda+\frac{p^2}{2\mu_1};-q-\frac{m_2}{m_2+m_3}p,q'+\frac{m_3}{m_2+m_3}p)}{\frac{{q'}^2}{2m_{23}} + \frac{  q' \cdot p}{m_2} + \frac{p^2}{2m_{12}}+\la} (\rho^{(3),\ve}(\la;p,q') - \xi^{(3)}(\la;q'))  \ri|^2 \right]
\end{aligned}
\end{equation*}
The latter inequality, together with Lemma \ref{l:tvet} and  Rem. \ref{r:away2} (setting, as above, $C_{123} $ to be the least of $C_{12}$, $C_{23}$ and $C_{31}$) give
\begin{align}
&\int_{\RE} dp\,\frac{\left|\rho^{(1),\ve}(\la;q,p) - \xi^{(1)}(\la;p)\right|^2 }{(q^2+p^2+C_{123}\la)^{\bb}}   \label{norma1}\\
 \leqslant  & C \ve^{2b} \int_{\RE} \frac{dp}{{(q^2+p^2+C_{123}\la)^{\bb}}} \left(
 \int_{\RE} dq' \, \frac{|p|^{\bb} + |q|^{\bb} + |q'|^{\bb}    +1}{q'^2+p^2+C_{123}\la}\lf|f(q',p) \ri| \right)^2  \label{pezzo1}\\
&+C \ve^{2b} \int_{\RE} \frac{dp}{{(q^2+p^2+C_{123}\la)^{\bb}}} \left(
\int_{\RE} dq'\,\frac{|p|^{\bb} + |q|^{\bb} + |q'|^{\bb}    +1}{q'^2+p^2+C_{123}\la}  \lf|\xi^{(2)}(\la;q')  \ri|  \right)^2 \label{pezzo2}\\
&+C \ve^{2b} \int_{\RE} \frac{dp}{{(q^2+p^2+C_{123}\la)^{\bb}}} \left(
 \int_{\RE} dq'\,\frac{|p|^{\bb} + |q|^{\bb} + |q'|^{\bb}    +1}{q'^2+p^2+C_{123}\la} \left|\xi^{(3)}(\la;q') \ri| \right)^2  \label{pezzo3} \\
&+C \int_{\RE} \frac{dp}{{(q^2+p^2+C_{123}\la)^{\bb}}} \left(
 \int_{\RE} dq' \, \frac{\left| \rho^{(2),\ve}(\la;-p-q',q') - \xi^{(2)}(\la;q')  \ri|}{q'^2+p^2+C_{123}\la}
 \right)^2 \label{pezzo4}\\
&+ C \int_{\RE} \frac{dp}{{(q^2+p^2+C_{123}\la)^{\bb}}} \left(
    \int_{\RE} dq' \, \frac{\left| \rho^{(3),\ve}(\la;p,q') - \xi^{(3)}(\la;q')  \right|}{q'^2+p^2+C_{123}\la}  \right)^2 \label{pezzo5}
\end{align}

Using Cauchy-Schwarz inequality, and the trivial inequality \[(|p|^{\bb} + |q|^{\bb} + |q'|^{\bb}    +1)^2 \leq C (p^{2\bb} + q^{2\bb} + q'^{2\bb}    +1),\] the term in Eq. \eqref{pezzo1} can be estimated by
\begin{align*}
&   \ve^{2b} C\int_{\RE} \frac{dp}{{(q^2+p^2+C_{123}\la)^{\bb}}} \lf(  \int_{\RE} dq' \frac{p^{2\bb} + q^{2\bb} + q'^{2\bb}    +1         }{({q'}^2+p^2+C_{123}\la)^{2}}\ri)  \lf(  \int_{\RE} dq'  |f(q',p)|^2\ri) \\
%& =  \ve^{2b} C\int_{\RE} dp  \lf(  \int_{\RE} dq'  \frac{p^{2\bb} + q^{2\bb} + q'^{2\bb}    +1         }{(q^2+p^2+C_{123}\la)^{\bb} ({q'}^2+p^2+C_{123}\la)^{2}}\ri)  \lf(  \int_{\RE} dq'  |f(q',p)|^2\ri) \\
&\leqslant  \ve^{2b} C\| f\|_{L^2 (\RE^2)}^2,
\end{align*}
where we used $ \frac{p^{2\bb} + q^{2\bb}}{(q^2+p^2+C_{123}\la)^{\bb}}\leq C $ and $ \frac{1}{(q^2+p^2+C_{123}\la)^{\bb}} \leq 1$ (for $\la$ large enough), which in turn imply 
\[
 \frac{1}{(q^2+p^2+C_{123}\la)^{\bb}} \int_{\RE} dq'  \frac{p^{2\bb} + q^{2\bb} + q'^{2\bb}    +1}{({q'}^2+p^2+C_{123}\la)^{2}}
 \leqslant C \int_{\RE} dq'  \frac{ q'^{2\bb}    +1         }{ ({q'}^2+C_{123}\la)^{2}} \leqslant C.
\]

Using Cauchy-Schwarz inequality, and Bound  \eqref{xi-bound}, the term in Eq. \eqref{pezzo2} can be estimated by
\begin{equation*}
\ve^{2b} C\int_{\RE} \frac{dp}{{(q^2+p^2+C_{123}\la)^{\bb}}} \lf(  \int_{\RE} dq' \frac{p^{2\bb} + q^{2\bb} + q'^{2\bb}+1}{({q'}^2+p^2+C_{123}\la)^{2}}\ri)  \lf(  \int_{\RE} dq'  |\xi^{(2)} (q')|^2\ri) \leqslant  \ve^{2b} C\| f\|_{L^2 (\RE^2)}^2
\end{equation*}
where we used 
\begin{multline*}
 \int_{\RE^2} dp\, dq'  \frac{p^{2\bb} + q^{2\bb} + q'^{2\bb}    +1         }{(q^2+p^2+C_{123}\la)^{\bb} ({q'}^2+p^2+C_{123}\la)^{2}} \\ 
  \leqslant 
  \int_{\RE^2} dp\, dq'  \frac{1}{({q'}^2+p^2+C_{123}\la)^{2}} +  \int_{\RE^2} dp\, dq'  \frac{q'^{2\bb}    +1}{(p^2+C_{123}\la)^{\bb+\frac12} ({q'}^2+C_{123}\la)^{\frac32}} \leqslant C.
\end{multline*}
The same estimate holds true  for the term in Eq.  \eqref{pezzo3}. 

Using Cauchy-Schwarz inequality, the term in Eq.  \eqref{pezzo4} can be estimated by
\begin{align*}
&C\int_{\RE} \frac{dp}{{(q^2+p^2+C_{123}\la)^{\bb}}} \lf(  \int_{\RE} dq' \frac{p^{2\bb}  + q'^{2\bb}    +(C_{123}\la)^{\bb}       }{({q'}^2+p^2+C_{123}\la)^{2}} \ri)
  \lf(  \int_{\RE} dq'\frac{ \lf|\rho^{(2),\ve}(\la;-p-q',q') - \xi^{(2)}(\la;q'))  \ri|^2}{ (q'^2+p^2+C_{123}\la)^{\bb} }  \ri) \\
& \leqslant \f{C}{\la}\sup_{p\in\RE} \int_{\RE} dq'\frac{ \lf|\rho^{(2),\ve}(\la;-p-q',q') - \xi^{(2)}(\la;q'))  \ri|^2}{ (q'^2+p^2+C_{123}\la)^{\bb} }  
\end{align*}
where we used
\begin{align*}
& \int_{\RE^2} dp \, dq'  \frac{p^{2\bb}  + q'^{2\bb}    +(C_{123}\la)^{\bb}          }{(q^2+p^2+C_{123}\la)^{\bb} ({q'}^2+p^2+C_{123}\la)^{2}} \\
& \leqslant \int_{\RE^2} dp \, dq'  \frac{1         }{ ({q'}^2+p^2+C_{123}\la)^{2}} +\int_{\RE^2} dp \, dq'  \frac{q'^{2\bb}             }{(p^2+C_{123}\la)^{\bb} ({q'}^2+p^2+C_{123}\la)^{2}} \\
&  \leqslant \f{C}{\la},
\end{align*}
which can be easily proved by scaling. In the same way, the term in Eq. \eqref{pezzo5} can be estimated by
\[
 \f{C}{\la}\sup_{p\in\RE} \int_{\RE} dq'\frac{ \lf|\rho^{(3),\ve}(\la;p,q') - \xi^{(3)}(\la;q')  \ri|^2}{ (q'^2+p^2+C_{123}\la)^{\bb} } .
\]
Therefore we obtain
\begin{multline} \label{penna}
\sup_{q\in\RE}\int_{\RE} dp\,\frac{\left|\rho^{(1),\ve}(\la;q,p) - \xi^{(1)}(\la;p)\right|^2 }{(q^2+p^2+C_{123}\la)^{\bb}}  \leqslant    \ve^{2b} C\| f\|_{L^2 (\RE^2)}^2 + \\
+ \f{C}{\la} \lf(  \sup_{q\in\RE} \int_{\RE} dp\frac{ \lf|\rho^{(2),\ve}(\la;-p-q,p) - \xi^{(2)}(\la;p))  \ri|^2}{ (q^2+p^2+C_{123}\la)^{\bb} }  +
\sup_{q\in\RE} \int_{\RE} dp\frac{ \lf|\rho^{(3),\ve}(\la;q,p) - \xi^{(3)}(\la;p)  \ri|^2}{ (q^2+p^2+C_{123}\la)^{\bb} } \ri).
\end{multline}
Two similar bounds with $|\rho^{(2),\ve}(\la;q,p) - \xi^{(2)}(\la;p)|$ and $|\rho^{(3),\ve}(\la;q,p) - \xi^{(3)}(\la;p)|$ at the l.h.s. are obtained by permutation of the indices. 

Estimate \eqref{penna} is not sufficient to close the proof since it  involves also terms containing the function  $\rho^{(\ell),\ve}(\la;-p-q,p)$. To obtain bounds on those terms we note that 
\begin{align}
&\left|\rho^{(1),\ve}(\la;-q-p,p) - \xi^{(1)}(\la;p)\right|^2   \leqslant \nonumber\\
 \leqslant & C\left[  \lf|
 \int_{\RE} dq' \, \frac{t_{23}^\ve(\lambda+\frac{p^2}{2\mu_1};q+\frac{m_3}{m_2+m_3}p,-q'-\frac{m_2}{m_2+m_3}p)-\tau_{23}\bigl(\lambda+\frac{p^2}{2\mu_1}\bigr)}{\frac{{q'}^2}{2m_{23}}+\frac{q'\cdot p}{m_3}+\frac{p^2}{2m_{31}}+\lambda}f(q',p) \ri|^2  \right.\nonumber \\
&+\lf|
\int_{\RE} dq'\,\frac{t_{23}^\ve(\lambda+\frac{p^2}{2\mu_1};q+\frac{m_3}{m_2+m_3}p,-q'-\frac{m_2}{m_2+m_3}p)- \tau_{23}\bigl(\lambda+\frac{p^2}{2\mu_1}\bigr)}{\frac{{q'}^2}{2m_{23}}+\frac{q'\cdot p}{m_3}+\frac{p^2}{2m_{31}}+\lambda}\xi^{(2)}(\la;q')  \ri|^2  \nonumber \\
&+ \lf|
 \int_{\RE} dq'\,\frac{t_{23}^\ve(\lambda+\frac{p^2}{2\mu_1};q+\frac{m_3}{m_2+m_3}p,q'+\frac{m_3}{m_2+m_3}p)-\tau_{23}\bigl(\lambda+\frac{p^2}{2\mu_1}\bigr)}{\frac{{q'}^2}{2m_{23}}+\frac{q'\cdot p}{m_2}+\frac{p^2}{2m_{12}}+\lambda}\xi^{(3)}(\la;q') \ri|^2   \nonumber\\
&+\lf|
 \int_{\RE} dq' \, \frac{t_{23}^\ve(\lambda+\frac{p^2}{2\mu_1};q+\frac{m_3}{m_2+m_3}p,-q'-\frac{m_2}{m_2+m_3}p)}{\frac{{q'}^2}{2m_{23}} + \frac{q' \cdot p}{m_3} + \frac{p^2}{2m_{31}}+\la} (\rho^{(2),\ve}(\la;-p-q',q') - \xi^{(2)}(\la;q'))  \ri|^2  \nonumber \\
&\left.+ \lf|
    \int_{\RE} dq' \, \frac{t_{23}^\ve(\lambda+\frac{p^2}{2\mu_1};q+\frac{m_3}{m_2+m_3}p,q'+\frac{m_3}{m_2+m_3}p)}{\frac{{q'}^2}{2m_{23}} + \frac{  q' \cdot p}{m_2} + \frac{p^2}{2m_{12}}+\la} (\rho^{(3),\ve}(\la;p,q') - \xi^{(3)}(\la;q'))  \ri|^2 \right] \nonumber
\end{align}
Repeating the same steps used  from  Eq. \eqref{norma1} to Eq. \eqref{penna}, one can see that  the estimate 
\begin{multline} \label{matita}
\sup_{q\in\RE}\int_{\RE} dp\,\frac{\left|\rho^{(1),\ve}(\la;-q-p,p) - \xi^{(1)}(\la;p)\right|^2 }{(q^2+p^2+C_{123}\la)^{\bb}}  \leqslant    \ve^{2b} C\| f\|_{L^2 (\RE^2)}^2 + \\
+ \f{C}{\la} \lf(  \sup_{q\in\RE} \int_{\RE} dp\frac{ \lf|\rho^{(2),\ve}(\la;-p-q,p) - \xi^{(2)}(\la;p))  \ri|^2}{ (q^2+p^2+C_{123}\la)^{\bb} }  +
\sup_{q\in\RE} \int_{\RE} dp\frac{ \lf|\rho^{(3),\ve}(\la;q,p) - \xi^{(3)}(\la;p)  \ri|^2}{ (q^2+p^2+C_{123}\la)^{\bb} } \ri)
\end{multline}
holds trues, and similar ones are obtained by permutation of the indices. 

Summing up over permutations of indices the estimates \eqref{penna} and \eqref{matita} we obtain
\begin{equation*}
\begin{aligned}
&\sum_{j=1}^3 \left(\sup_{q\in\RE}\int_{\RE} dp\,\frac{\left|\rho^{(j),\ve}(\la;-q-p,p) - \xi^{(j)}(\la;p)\right|^2 }{(q^2+p^2+C_{123}\la)^{\bb}} +
 \sup_{q\in\RE}\int_{\RE} dp\,\frac{\left|\rho^{(j),\ve}(\la;q,p) - \xi^{(j)}(\la;p)\right|^2 }{(q^2+p^2+C_{123}\la)^{\bb}}\right)  \\ 
 \leqslant &
 \ve^{2b} C\| f\|_{L^2 (\RE^2)}^2  \\
&+\f{C}{\la} 
\sum_{j=1}^3 \left( \sup_{q\in\RE}\int_{\RE} dp\,\frac{\left|\rho^{(j),\ve}(\la;-q-p,p) - \xi^{(j)}(\la;p)\right|^2 }{(q^2+p^2+C_{123}\la)^{\bb}} +
\sup_{q\in\RE}\int_{\RE} dp\,\frac{\left|\rho^{(j),\ve}(\la;q,p) - \xi^{(j)}(\la;p)\right|^2 }{(q^2+p^2+C_{123}\la)^{\bb}} 
\right) .
\end{aligned}
\end{equation*}
For $\la$ sufficiently large, the latter inequality implies 
\begin{multline*}
\sum_{j=1}^3\left( \sup_{q\in\RE}\int_{\RE} dp\,\frac{\left|\rho^{(j),\ve}(\la;-q-p,p) - \xi^{(j)}(\la;p)\right|^2 }{(q^2+p^2+C_{123}\la)^{\bb}} +\sup_{q\in\RE}\int_{\RE} dp\,\frac{\left|\rho^{(j),\ve}(\la;q,p) - \xi^{(j)}(\la;p)\right|^2 }{(q^2+p^2+C_{123}\la)^{\bb}}\right) \\ 
 \leqslant
\ve^{2b} C\| f\|_{L^2 (\RE^2)}^2.
\end{multline*}
Hence,  from Bounds \eqref{gl} and \eqref{atom} it follows that 
\begin{equation*}\label{gl2}
 \| (\hat \RR^\ve(\la) - \hat \RR(\la)) f \|^2_{L^2(\RE^2)} \leq \ve^{2b} C\| f\|_{L^2 (\RE^2)}^2
\end{equation*}
and the proof is concluded.
\end{proof}

\appendix

\section{Faddeev's equations \label{app:A}}
 For the convenience of the reader, in this section we shortly recall the derivation of Faddeev's equations, for more details we refer to Faddeev's book \cite{faddeev_book}. 
\subsection{Resolvent formulae and Faddeev equations}
We look for an equation for the resolvent of Hamiltonian \eqref{Hred}. We start with the resolvent identity and write 
\begin{equation}\label{rep1}
\RR^\ve(\lambda) = (\HH^\ve+\lambda)^{-1} =( \HH_0 +\sum_{\sigma} \VV_{\sigma}^\ve+\lambda)^{-1} = \RR_0(\lambda) + \sum_{\ell}\RR^{(\ell),\ve}(\lambda),
\end{equation}
where $\RR_0(\lambda) = ( \HH_0 +\lambda)^{-1}$ is the resolvent of the free Hamiltonian $\HH_0$, and 
\begin{equation}\label{perfect1}
\RR^{(\ell),\ve}(\lambda): = - \RR_0(\lambda) \VV_{\gamma}^\ve \RR^\ve(\lambda).
\end{equation}
On the other hand, again by the resolvent identity, one has  
\begin{equation}\label{perfect2}\RR^\ve(\lambda) =  \RR_\gamma^\ve(\lambda) -\RR_{\gamma}^\ve(\lambda) \sum_{\sigma\neq\gamma }\VV_\sigma^\ve \RR^\ve(\lambda),\end{equation}
where $\RR_\gamma^\ve(\lambda): =( \HH_0 + \VV_{\gamma}^\ve+\lambda)^{-1}$.  Note that  the Hamiltonian $\HH_\gamma^\ve:=\HH_0 + \VV_{\gamma}^\ve$ in the coordinates $(x_{\gamma},y_\ell)$ is factorized, because $\VV_\gamma^\ve  = \VV_\gamma^\ve (x_\gamma)$.
%:
%\begin{equation*}\label{HHgamma}\HH_\gamma^\ve = \HH_0 + \VV_{\gamma}^\ve =  - \frac{1}{2m_{\gamma}} \Delta_{x_{\gamma}} + \VV_\gamma^\ve - \frac{1}{2\mu_\ell} \Delta_{y_\ell} .\end{equation*}
Plugging Eq. \eqref{perfect2} in  Eq. \eqref{perfect1} one ends up with 
\begin{equation}\label{nine1}
\RR^{(\ell),\ve}(\lambda) = - \RR_0(\lambda) \VV_{\gamma}^\ve\RR_\gamma^\ve(\lambda)  + \RR_0(\lambda) \VV_{\gamma}^\ve \RR_{\gamma}^\ve(\lambda) \sum_{\sigma\neq\gamma }\VV_\sigma^\ve \RR^\ve(\lambda).
\end{equation}
Next we define the operator 
\[\TT_\gamma^\ve(\lambda)  := \VV_\gamma^\ve -\VV_\gamma^\ve \RR_\gamma^\ve(\lambda)\VV_\gamma^\ve\] 
and note the identity 
\begin{equation}\label{nine2}
\RR_0(\lambda) \VV_{\gamma}^\ve\RR_\gamma^\ve(\lambda)  = \RR_0(\lambda) \TT_{\gamma}^\ve(\la)\RR_0(\lambda),
\end{equation}
which is a direct consequence of the resolvent identity $\RR_\gamma^\ve(\lambda) =  \RR_0 (\lambda) -\RR_{\gamma}^\ve (\lambda)\VV_\gamma^\ve \RR_0(\lambda)$. By using Eq. \eqref{nine2} in Eq. \eqref{nine1} we get  
\begin{equation}\label{rep2}
\RR^{(\ell),\ve}(\lambda) = - \RR_0(\lambda)  \TT_{\gamma}^\ve(\la)\RR_0(\lambda) - \RR_0(\lambda)  \TT_{\gamma}^\ve(\la)\sum_{m\neq \ell}\RR^{(m),\ve}(\lambda).
\end{equation}
%
%We note that by the resolvent  identity and by Eq. \eqref{nine2} we also have 
%\[\RR_\gamma(\lambda) =  \RR_0 (\lambda) - \RR_0(\lambda)\VV_\gamma \RR_{\gamma} (\lambda) =  \RR_0 (\lambda) -\RR_0(\lambda) \TT_{\gamma}(\la)\RR_0(\lambda), \]
%hence
%\[
%\RI{\ell}(\lambda) =\RR_\gamma(\lambda) -  \RR_0 (\lambda)  - \RR_0(\lambda)  \TT_{\gamma}(\la)\sum_{m\neq \ell}\RI{m}(\lambda).
%\]
%
By Eqs. \eqref{rep1} and \eqref{rep2},  we conclude that for any function $f\in L^2(\RE^2)$ one has 
\begin{equation}\label{resolvent}
\RR^\ve(\lambda)  f = \RR_0(\lambda) f + \sum_{m}g^{(m),\ve}(\lambda) \qquad \textrm{with} \quad g^{(\ell),\ve}(\la) =\RR^{(\ell),\ve}(\lambda)  f ,
\end{equation}
where the functions $g^{(\ell),\ve}(\la)$
%\begin{equation}\label{gg}
%g^{(\ell),\ve}(\la) =\RR^{(\ell),\ve}(\lambda)  f%  =  - \RR_0(\lambda) \VV_{\gamma} \RR(\lambda) \tilde f
%\end{equation} 
must  solve the system of  equations 
\begin{equation}\label{faddeev}
g^{(\ell),\ve}(\la) =   - \RR_0(\lambda)  \TT_{\gamma}^\ve(\la)\RR_0(\lambda) f - \RR_0(\lambda)  \TT_{\gamma}^\ve(\la)\sum_{m\neq \ell}g^{(m),\ve}(\lambda).
\end{equation}

The system \eqref{faddeev} expresses a form of  Faddeev's equations \cite{faddeev-jetp61}. In our analysis we shall write  Faddeev's equations (in Fourier transform)   for the functions 
\begin{equation}\label{hand}
\rho^{(\ell),\ve}(\la) :=(\hat \HH_0+\la)\hat g^{(\ell),\ve}(\la).
\end{equation} By Eqs.  \eqref{resolvent} and \eqref{faddeev}, it is easy to convince oneself that the resolvent $\hat\RR^\ve(\la)$ can be written as in Eq. \eqref{horse} and that  the functions $\rho^{(\ell),\ve}(\la)$ must satisfy the system of equations obtained by Eq.  \eqref{shepard} through permutation of the indices.
  
%\begin{equation}\label{trho}
%\rho^{(\ell),\ve}(\la)=   -  \TT_{\gamma}^\ve(\la)\RR_0(\lambda) f -  \TT_{\gamma}^\ve(\la)\RR_0(\lambda)\sum_{m\neq \ell}\rho^{(m),\ve}(\la).
%\end{equation}

\subsection{\label{ss:onepart}Reduced operators in terms of one-particle operators} In this section we derive a formula for the operators $\TT_\gamma^\ve(\la)$ in terms of one-particle operators. This formula allows to write the action of the operator $\TT_\gamma^\ve(\la)$ as in Eq. \eqref{electric} and to obtain Eq. \eqref{rho1eq}. 

We denote by lower case letters one-particle operators, i.e., operators acting on the space $L^2(\RE)$. In particular we shall use the notation 
\[\hh_{0}^{(\ell)} := -\frac1{2\mu_\ell}\Delta_{y_\ell} ,\quad \hh_{0}^{(\ell)} :L^2(\RE,dy_\ell) \to L^2(\RE,dy_\ell);\]
\[\hh_{0,\gamma} := -\frac1{2m_\gamma}\Delta_{x_\gamma} ,\quad \hh_{0,\gamma} :L^2(\RE,dx_\gamma) \to L^2(\RE,dx_\gamma);\]
$\rr_{0,\gamma}(\lambda) := (\hh_{0,\gamma} +\lambda)^{-1};$
\[\hh_{\gamma}^\ve := \hh_{0,\gamma} +\vv_\gamma^\ve,\quad \hh_{\gamma} :L^2(\RE,dx_\gamma) \to L^2(\RE,dx_\gamma);\]
$\rr_{\gamma}^\ve(\lambda) := (\hh_{\gamma}^\ve +\lambda)^{-1};$
here $\vv_\gamma^\ve$ is the two particle potential understood as a multiplication operator in $L^2(\RE,dx_\gamma)$. 

In particular we shall be interested in the one particle operator defined by the identity 
\begin{equation*}\label{tt1}
\tt_\gamma^\ve(\lambda) : = \vv_\gamma^\ve -\vv_\gamma^\ve \rr_\gamma^\ve(\lambda)\vv_\gamma^\ve,\quad \tt_\gamma^\ve(\lambda) :L^2(\RE,dx_\gamma) \to L^2(\RE,dx_\gamma).
\end{equation*}

We note that, by the resolvent identity $\rr_\gamma^\ve(\lambda) = \rr_{0,\gamma}(\lambda) - \rr_{0,\gamma}(\lambda)  \vv_\gamma^\ve \rr_\gamma^\ve(\lambda) $ 
one infers  that the operator $\tt(\lambda)$ satisfies the equation 
\begin{equation}\label{tt2}
\tt_\gamma^\ve(\lambda) = \vv_\gamma^\ve -\vv_\gamma^\ve \rr_{0,\gamma}(\lambda)\tt_\gamma^\ve(\lambda).
\end{equation}

Recalling that the Hamiltonian $\HH_\gamma^\ve$ is factorized in the coordinates $(x_\gamma,y_{\ell})$, one has that $\RR_\gamma^\ve (\lambda)$ can be formally written as 
\begin{equation*}\label{RR2}\RR_\gamma^\ve (\lambda) =  \rr_\gamma^\ve(\lambda + \hh_0^{(\ell)}),  \quad \RR_\gamma^\ve (\lambda):L^2(\RE^2,dx_\gamma dy_{\ell}) \to L^2(\RE^2,dx_\gamma dy_\ell) . 
\end{equation*}
Similarly 
\begin{equation}\label{TT2}
\TT_\gamma^\ve(\lambda) : = \VV_\gamma^\ve -\VV_\gamma^\ve \rr_\gamma^\ve(\lambda+\hh_0^{(\ell)})\VV_\gamma^\ve = \tt_\gamma^\ve(\lambda+\hh_0^{(\ell)}) :L^2(\RE^2,dx_\gamma dy_\ell) \to L^2(\RE^2,dx_\gamma dy_\ell) .
\end{equation} 
Identity \eqref{TT2} can be understood in Fourier transform, see Eqs. \eqref{tt} -  \eqref{electric}. 

\section{Some useful explicit formulae \label{app:B}}
In this section we collect several useful formulae, in particular for the operators appearing in Section \ref{s:Halde}. For sake of concreteness we write the formulae in the coordinates $(x_{23},y_1)$, and their conjugates $(k_{23},p_1)$, or in the coordinates $(p_2,p_1)$. Additional formulae are obtained by permutation of the indices or by change of variables. 

We remark that  the Fourier transform is defined so as to be unitary in $L^2(\RE^d)$. Explicitly, the Fourier transform in $L^2(\RE^d)$ is denoted by  ${\hat{}\mbox{} }$ and defined as  
\[\hat f(k) := \frac{1}{(2\pi)^{d/2}}  \int_{\RE^d} e^{-ikx} f(x) dx. \]
The inverse Fourier transform is given by 
\[\check f (x) :=  \frac{1}{(2\pi)^{d/2}}\int_{\RE^d} e^{ikx} f(k) dk. \]
Moreover
\[\widehat{(f*g)}(k) = (2\pi)^{d/2} \hat f (k) \, \hat g (k); \]
\[\widehat{(fg)}(k) = \frac{1}{(2\pi)^{d/2}} (\hat f *\hat g) (k);\]
and
\[(f,g)_{L^2(\RE^d)} = (\hat f,\hat g)_{L^2(\RE^d)}.  \]

We start by noticing that the Fourier transform of the operator $\breve G_{23}$, see Eq. \eqref{breveG}, is given by 
\begin{equation}\label{breveGfourier}
\widehat{\breve G_{23}}(\la)\hat f(p_1) = \frac{1}{\sqrt{2\pi}} \int_{\RE} dk_{23} \frac{1}{\frac{k^2_{23}}{2 m_{23}} + \frac{p_1^2}{2\mu_1} + \la} \hat f(k_{23},p_1).  \end{equation}
Hence, 
\begin{equation*}
\breve G_{23}(\la)f(y_1) = \frac{1}{\sqrt{2\pi}} \int_{\RE} dp_1 e^{iy_1p_1} \widehat{\breve G_{23}}(\la)\hat{f}(p_1). 
\end{equation*}

By taking the adjoint of $\breve G_{23}(\la)$, it is easy to convince oneself  that in Fourier transform the operator $G_{23}(\la)$ acts as the multiplication operator 
\begin{equation}\label{Gfourier}
\hat{G}_{23}(\la)\hat q (k_{23},p_1) =   \frac{1}{\sqrt{2\pi}}  \frac{1}{\frac{k^2_{23}}{2 m_{23}} + \frac{p_1^2}{2\mu_1} + \la} \hat q(p_1).
\end{equation}
Hence, 
\begin{align}
G_{23}(\la) q (x_{23},y_1)  = & \frac{1}{2\pi} \int_{\RE^2} dk_{23} \, dp_1 \, e^{ix_{23}k_{23}+iy_1p_1} \hat{G}_{23}(\la)\hat q (k_{23}, p_1)  
\label{pigs0} \\ 
= &  \frac{1}{2\sqrt{2\pi}} \int_{\RE} dp_1 e^{iy_1p_1}\, \sqrt{\frac{2m_{23}}{\frac{p_1^2}{2\mu_1}+\la}} \, e^{-|x_{23}| \sqrt{2m_{23} \left(\frac{p_1^2}{2\mu_1}+\la\right)}} \, \hat q(p_1),
\label{pigs}
\end{align}
where  the latter identity was obtained by integrating over $k_{23}$. 

Noticing that $M_{23,23}(\la)q(y_1) = G_{23}(\la) q (0,y_1)$ and taking into account Eq. \eqref{pigs} one infers that in Fourier transform $M_{23,23}(\la)$ acts as the multiplication operator 
\begin{equation}\label{night1}
\hat M_{23,23}(\la)\hat q(p_1) = \frac12 \sqrt{\frac{2m_{23}}{\frac{p_1^2}{2\mu_1}+\la}} \,  \hat q(p_1), 
\end{equation}
and $ M_{23,23}(\la) q(y_1) = \frac1{\sqrt{2\pi}} \int_{\RE} dp_1 \, e^{iy_1p_1}\, \hat M_{23,23}(\la)\hat q(p_1)$. 

To obtain the expression of $M_{23,12}(\la)$ in Fourier transform recall that, by changing the indices  in Eqs. \eqref{Gfourier} and \eqref{pigs0}, one has 
\begin{equation*}
G_{12}(\la) q (x_{12},y_3)  =  \frac{1}{2\pi} \int_{\RE^2} dk_{12} \, dp_3 \, e^{ix_{12}k_{12}+iy_3p_3} 
 \frac{1}{\sqrt{2\pi}}  \frac{1}{\frac{k^2_{12}}{2 m_{12}} + \frac{p_3^2}{2\mu_3} + \la} \hat q(p_3).
\end{equation*} 
In the Jacobi coordinates $(x_{23},y_{1})$ (and the corresponding conjugate set $(k_{23},p_1)$), one has that $(x_{12}k_{12}+y_3p_3)|_{\pi_{23}} = y_1 p_1$. Since $M_{23,12}(\la)q(y_1) = G_{12}(\la) q|_{\pi_{23}}(y_1)$ and by the change of  variables $(k_{12},p_3)\to (k_{23},p_1)$ in the integral above, one obtains  
\begin{equation*}
M_{23,12}(\la)q(y_1) = \frac{1}{2\pi} \int_{\RE^2} dk_{23} \, dp_1 \, e^{iy_1p_1} 
 \frac{1}{\sqrt{2\pi}}  \frac{1}{\frac{k^2_{23}}{2 m_{23}} + \frac{p_1^2}{2\mu_1} + \la} \hat q(p_3(k_{23},p_1)),
\end{equation*}
note that  $p_3$ in the function  $\hat q$ must be understood as a function of the variables $(k_{23},p_1)$, as in Eq. \eqref{change2}. By the change of variables $(k_{23},p_1)\to (p_3,p_1)$ it follows that  
\begin{equation*}
M_{23,12}(\la)q(y_1) = \frac{1}{2\pi} \int_{\RE^2} dp_3 \, dp_1 \, e^{iy_1p_1} 
 \frac{1}{\sqrt{2\pi}}  \frac{1}{\frac{p_3^2}{2m_{23}}+\frac{p_3\cdot p_1}{m_2}+\frac{p_1^2}{2m_{12}}+\lambda}\hat q(p_3), 
\end{equation*}
hence in Fourier transform  $M_{23,12}(\la)$ acts as 
\begin{equation}\label{night2}
\hat M_{23,12}(\la)\hat q(p_1)  =  \frac{1}{2\pi} \int_{\RE} dp_3 \frac{1}{\frac{p_3^2}{2m_{23}}+\frac{p_3\cdot p_1}{m_2}+\frac{p_1^2}{2m_{12}}+\lambda}\hat q(p_3). 
\end{equation}
In a similar way one obtains 
\begin{equation}\label{night3}
\hat M_{23,31}(\la)\hat q(p_1)  =  \frac{1}{2\pi} \int_{\RE} dp_2 \frac{1}{\frac{p_2^2}{2m_{23}}+\frac{p_2\cdot p_1}{m_3}+\frac{p_1^2}{2m_{31}}+\lambda}\hat q(p_2). 
\end{equation}

We conclude this section by noting that in the coordinates $(p_2,p_1)$ the operators  $\widehat{\breve G_{23}}(\la)$ and  $\hat{G}_{23}(\la)$, see Eqs. \eqref{breveGfourier} and \eqref{Gfourier}, are  given by 
\begin{equation*}
\widehat{\breve G_{23}}(\la)\hat f(p_1) = \frac{1}{\sqrt{2\pi}} \int_{\RE} dp_2  \frac{1}{\frac{p_2^2}{2m_{23}} + \frac{p_2 \cdot p_1}{m_3} + \frac{p_1^2}{2m_{13}}+\la}  \hat f(p_2 ,p_1). 
\end{equation*}
and 
\begin{equation}\label{file}
\hat{G}_{23}(\la)\hat q (p_2,p_1) =   \frac{1}{\sqrt{2\pi}}  \frac{1}{\frac{p_2^2}{2m_{23}} + \frac{p_2 \cdot p_1}{m_3} + \frac{p_1^2}{2m_{13}}+\la} \hat q(p_1), 
\end{equation}
here with a slight abuse of notation we used the same symbols to denote the function $\hat f$ with the same symbol both in coordinates $(k_{23},p_1)$ and $(p_2,p_1)$. 

Similar identities are obtained by changes of variables and permutations of the indices.

\end{document}